\documentclass{llncs}

\usepackage[dvipdfmx]{graphicx}
\usepackage{amssymb}
\usepackage{amsmath}
\usepackage{enumerate}
\pagestyle{plain}
\thispagestyle{plain}


\title{Parameterized Complexity of the List Coloring Reconfiguration Problem with Graph Parameters\thanks{This work is partially supported by JST CREST Grant Number
	JPMJCR1402, and by JSPS KAKENHI Grant Numbers JP16J02175, JP16K00003, and JP16K00004.}}

\author{
	Tatsuhiko Hatanaka
\and
	Takehiro Ito
\and
	Xiao Zhou
}

\institute{
	Graduate School of Information Sciences, 
	Tohoku University, \\
	Aoba-yama 6-6-05, Sendai 980-8579, Japan.\\
	\email{\{hatanaka, takehiro, zhou\}@ecei.tohoku.ac.jp}
}


\newcommand{\Neigh}[2]{N(#1,#2)}
\newcommand{\mxcli}[1]{\omega (#1)}
\newcommand{\NN}{\mathbb{N}}
\newcommand{\VC}{\mathsf{vc}}


\newcommand{\lcr}{\textsc{list coloring reconfiguration}}
\newcommand{\kcr}{\textsc{coloring reconfiguration}}
\newcommand{\lcsr}{\textsc{list coloring shortest reconfiguration}}

\newcommand{\mapf}{f}
\newcommand{\mapg}{g}
\newcommand{\verlist}{L}
\newcommand{\colorset}{C}
\newcommand{\numk}{k}
\newcommand{\cind}[1]{c_{#1}}

\newcommand{\ini}{0}
\newcommand{\tar}{t}
\newcommand{\seq}[1]{\mathcal{#1}}
\newcommand{\Seq}{\seq{S}}

\newcommand{\diff}[2]{\mathsf{dif}(#1,#2)}



\newcommand{\Inst}{\seq{I}}

\newcommand{\weight}{w}
\newcommand{\gap}[3]{\mathsf{gap}_{#1}(#2,#3)}
\newcommand{\len}[2]{\mathsf{len}_{#1}(#2)}
\newcommand{\OPT}[2]{\mathsf{OPT}(#1,#2)}
\newcommand{\nrec}[2]{\# (#1,#2)}
 





\newcommand{\MD}[1]{\mathsf{MD}(#1)}
\newcommand{\PMD}[1]{\mathsf{PMD}(#1)}
\newcommand{\mw}[1]{\mathsf{mw}(#1)}
\newcommand{\pmw}[1]{\mathsf{pmw}(#1)}
\newcommand{\MW}{\mathsf{mw}}
\newcommand{\PMW}{\mathsf{pmw}}

\newcommand{\SUB}[2]{\mathsf{Sub}(#1,#2)}
\newcommand{\Quo}[1]{\mathsf{Q}(#1)}
\newcommand{\coG}[1]{\mathsf{CG}(#1)}





\newcommand{\asgn}[1]{A(#1)}

\newcommand{\kernel}[1]{h_{k,\PMW}(#1)}

\newcommand{\IDM}[1]{\mathcal{M}_m(#1)}
\newcommand{\IDMi}[2]{\mathcal{M}_{#2}(#1)}

\newcommand{\Vcov}{V_C}
\newcommand{\Vind}{V_I}


\newcommand{\hset}[1]{\{ #1\}}
\newcommand{\hseq}[1]{\langle #1\rangle}

\newcommand{\rest}[2]{{#1^{#2}}}

\newcommand{\key}{c^*}
\newcommand{\select}{V_\mathsf{sel}}

\newcommand{\Vfbd}{V_\mathsf{for}}
\newcommand{\fbd}[2]{w}

\newcounter{one}
\setcounter{one}{1}

\newcounter{two}
\setcounter{two}{2}

\newcounter{three}
\setcounter{three}{3}

\newcounter{four}
\setcounter{four}{4}

\newcounter{five}
\setcounter{five}{5}

\newcounter{six}
\setcounter{six}{6}

\newtheorem{observation}{Observation}


\begin{document}
\maketitle
\sloppy

\begin{abstract}
	Let $G$ be a graph such that each vertex has its list of available colors, and assume that each list is a subset of the common set consisting of $k$ colors.
	For two given list colorings of $G$, we study the problem of transforming one into the other by changing only one vertex color assignment at a time, while at all times maintaining a list coloring.
	This problem is known to be PSPACE-complete even for bounded bandwidth graphs and a fixed constant $k$.
	In this paper, we study the fixed-parameter tractability of the problem when parameterized by several graph parameters.
	We first give a fixed-parameter algorithm for the problem when parameterized by $k$ and the modular-width of an input graph.
	We next give a fixed-parameter algorithm for the shortest variant when parameterized by $k$ and the size of a minimum vertex cover of an input graph.
	As corollaries, we show that the problem for cographs and the shortest variant for split graphs are fixed-parameter tractable even when only $k$ is taken as a parameter.
	On the other hand, we prove that the problem is W[1]-hard when parameterized only by the size of a minimum vertex cover of an input graph.
\end{abstract}

\section{Introduction}

	Recently, the framework of \emph{reconfiguration}~\cite{IDHPSUU} has been extensively studied in the field of theoretical computer science.
	This framework models several situations where we wish to find a step-by-step transformation between two feasible solutions of a combinatorial (search) problem such that all intermediate solutions are also feasible and each step respects a fixed reconfiguration rule.
	This reconfiguration framework has been applied to several well-studied combinatorial problems.
	(See a survey~\cite{Jan13}.)
	
	\subsection{Our problem}
	
	In this paper, we study a reconfiguration problem for list (vertex) colorings in a graph, which was introduced by Bonsma and Cereceda~\cite{BC09}.
	
	Let $\colorset = \{ \cind{1}, \cind{2}, \ldots, \cind{\numk} \}$ be the set of $\numk$ colors, called the \emph{color set}.
	A (proper) $\numk$\emph{-coloring} of a graph $G = (V,E)$ is a mapping $\mapf \colon V \to \colorset$ such that $\mapf(v) \neq \mapf(w)$ for every edge $vw \in E$.
	In \emph{list coloring}, each vertex $v \in V$ has a set $\verlist(v) \subseteq \colorset$ of colors, called the \emph{list of} $v$;
	sometimes, the list assignment $\verlist\colon V\to 2^C$ itself is called a \emph{list}.
	Then, a $\numk$-coloring $\mapf$ of $G$ is called an \emph{$\verlist$-coloring} of $G$ if $\mapf(v) \in \verlist(v)$ holds for every vertex $v \in V$.
	Therefore, a $\numk$-coloring of $G$ is simply an $\verlist$-coloring of $G$ when $\verlist(v) = \colorset$ holds for every vertex $v$ of $G$, and hence $\verlist$-coloring is a generalization of $k$-coloring.
	Figure~\ref{fig:example}(b) illustrates four $\verlist$-colorings of the same graph $G$ in \figurename ~1(a); the color assigned to each vertex is attached to the vertex.
	
	In the reconfiguration framework, two $\verlist$-colorings $\mapf$ and $\mapf^\prime$ of a graph $G=(V,E)$ are said to be \emph{adjacent} if $|\hset{v \in V \colon \mapf(v) \neq \mapf^\prime(v)}|=1$ holds, that is, $\mapf^\prime$ can be obtained from $\mapf$ by recoloring exactly one vertex.
	A sequence $\hseq{\mapf_0, \mapf_1,\allowbreak \ldots, \mapf_\ell}$ of $\verlist$-colorings of $G$ is called a \emph{reconfiguration sequence} between $\mapf_\ini$ and $\mapf_\ell$ (of \emph{length} $\ell$) if $\mapf_{i-1}$ and $\mapf_i$ are adjacent for each $i \in \hset{1,2,\ldots, \ell}$.
	Two $\verlist$-colorings $\mapf$ and $\mapf^\prime$ are \emph{reconfigurable} if there exists a reconfiguration sequence between them.
	The {\sc list coloring reconfiguration} problem is to determine whether two given $\verlist$-colorings $\mapf_{\ini}$ and $\mapf_{\tar}$ are reconfigurable, or not. 
	Figure~\ref{fig:example} shows an example of a yes-instance of \lcr, where the vertex whose color assignment was changed from the previous one is depicted by a black circle.

	\begin{figure}[t]
		\begin{center}
			\includegraphics[width=0.85\textwidth,clip]{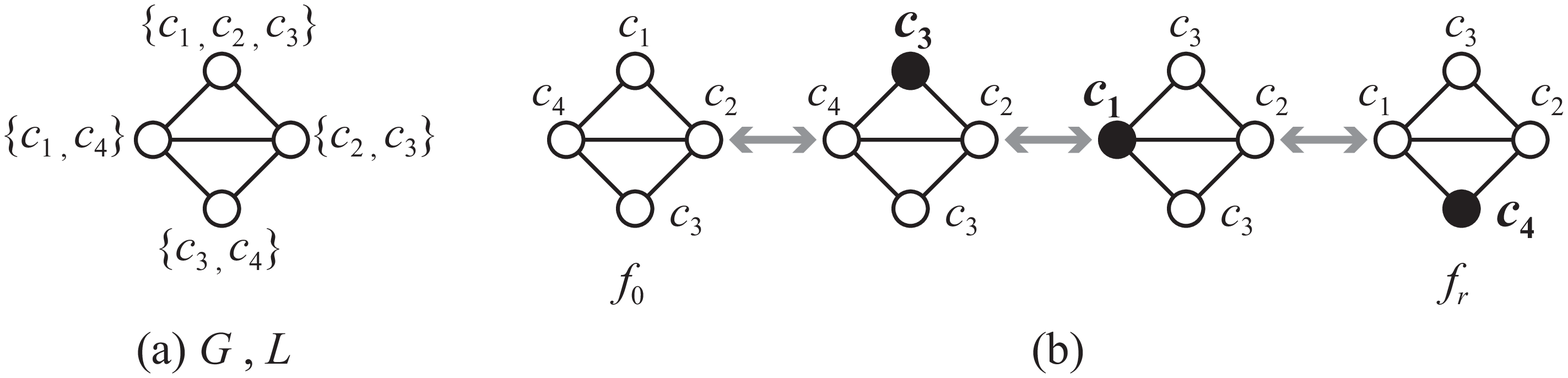}
		\end{center}
		\caption{A reconfiguration sequence between two $\verlist$-colorings $\mapf_\ini$ and $\mapf_\tar$ of $G$.}
		\label{fig:example}
	\end{figure}
	
	\subsection{Known and related results}
	
	\textsc{List coloring reconfiguration} is one of the most well-studied reconfiguration problems, as well as \kcr~which is a special case of the problem such that $\verlist(v) = \{ \cind{1}, \cind{2}, \ldots, \cind{\numk} \}$ holds for every vertex $v$.
	These problems have been studied intensively from various viewpoints~\cite{BB13,BC09,BJLPP14,BMNR14,Cer07,CHJ11,HIZ14,JKKPP14,Wro14} including the generalizations~\cite{BMMN16,Wro15}.
	
	Bonsma and Cereceda~\cite{BC09} proved that \kcr~is \allowbreak PSPACE-complete even for bipartite graphs and any fixed constant $k \ge 4$.
	On the other hand, Cereceda et al.~\cite{CHJ11} gave a polynomial-time algorithm solving \kcr ~for any graph and $k\le 3$; the algorithm can be applied to \lcr, too.
	In particular, the former result implies that there is no fixed-parameter algorithm for \kcr ~(and hence \lcr) when parameterized by only $k$ under the assumption of $\mathrm{P}\ne \mathrm{PSPACE}$.
	
	Bonsma et al.~\cite{BMNR14} and Johnson et al.~\cite{JKKPP14} independently developed a fixed-parameter algorithm to solve \kcr~when parameterized by $\numk + \ell$, where $\ell$ is the upper bound on the length of reconfiguration sequences, and again their algorithms can be applied to \lcr.
	In contrast, if \textsc{coloring reconfiguration} is parameterized only by $\ell$, then it is W[1]-hard when $\numk$ is an input~\cite{BMNR14} and does not admit a polynomial kernelization when $\numk$ is fixed unless the polynomial hierarchy collapses~\cite{JKKPP14}.
	
	Hatanaka et al.~\cite{HIZ14} proved that \lcr~is \allowbreak PSPACE-complete even for complete split graphs, whose modular-width is zero.
	Wrochna~\cite{Wro14} proved that \lcr~is PSPACE-complete even when $k$ and the bandwidth of an input graph are bounded by some constant; thus the treewidth and the cliquewidth of an input graph are also bounded.
	
	
	\begin{figure}[t]
		\begin{center}
			\includegraphics[width=\textwidth,clip]{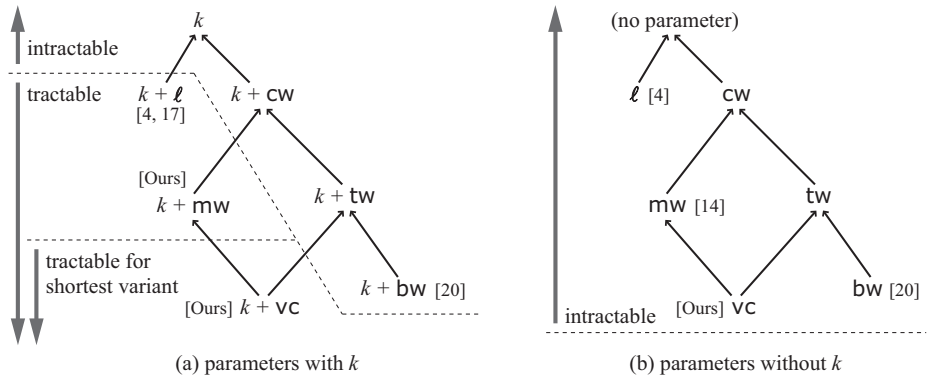}
		\end{center}
		\caption{All results (including known ones) for \textsc{list coloring reconfiguration} from the viewpoint of parameterized complexity, where $\mathsf{cw}$, $\mathsf{tw}$, $\mathsf{bw}$, $\mathsf{mw}$, and $\mathsf{vc}$ are the upper bounds on the cliquewidth, treewdith, bandwidth, modular-width, and the size of a minimum vertex cover of an input graph, respectively.}
		\label{fig:result}
	\end{figure}
	
	\subsection{Our contribution}
	
	To the best of our knowledge, known algorithmic results mostly employed the length $\ell$ of reconfiguration sequences as a parameter~\cite{BMNR14,JKKPP14}, and no fixed-parameter algorithm is known when parameterized by graph parameters. 
	Therefore, we study \textsc{list coloring reconfiguration} when parameterized by several graph parameters, and paint an interesting map of graph parameters which shows the boundary between fixed-parameter tractability and intractability.
	Our map is \figurename~\ref{fig:result} which shows both known and our results, where an arrow $\alpha \to \beta$ indicates that the parameter $\alpha$ is ``stronger'' than $\beta$, that is, $\beta$ is bounded if $\alpha$ is bounded.
	(For relationships of parameters, see, e.g.,~\cite{GLO13,KNR92}.)
	
	
	More specifically, we first give a fixed-parameter algorithm solving \lcr~when parameterized by $k$ and the modular-width $\MW$ of an input graph.
	(The definition of modular-width will be given in Section~\ref{sec:MD}.)
	Note that, according to the known results~\cite{BC09,HIZ14}, we cannot construct a fixed-parameter algorithm for general graphs when only one of $k$ and $\MW$ is taken as a parameter under the assumption of $\mathrm{P}\ne \mathrm{PSPACE}$.
	However, as later shown in Corollary~\ref{cor:cograph}, our algorithm implies that the problem is fixed-parameter tractable for cographs even when only $k$ is taken as a parameter.
	
	We then consider the shortest variant which computes the length of a shortest reconfiguration sequence (i.e., the minimum number of recoloring steps) for a yes-instance of \textsc{list coloring reconfiguration}, and show that it admits a fixed-parameter algorithm when parameterized by $k$ and the size of a minimum vertex cover of an input graph.
	Moreover, as a corollary, we show that the shortest variant is fixed-parameter tractable for split graphs even when only $k$ is taken as a parameter.
	
	Finally, we prove that \lcr~is $W[1]$-hard when parameterized only by the size of a minimum vertex cover of an input graph. 
	
	%
	
	\section{Preliminaries}
	\label{sec:pre}
	
	We assume without loss of generality that graphs are simple and connected.
	Let $G=(V,E)$ be a graph with vertex set $V$ and edge set $E$;
	we sometimes denote by $V(G)$ and $E(G)$ the vertex set and the edge set of $G$, respectively. 
	For a vertex $v$ in $G$, we denote by $\Neigh{G}{v}$ the neighborhood $\hset{u\in V\colon uv\in E}$ of $v$ in $G$. 
	For a vertex subset $V^\prime \subseteq V$, we denote by $G[V^\prime]$ the subgraph of $G$ induced by $V^\prime$, and denote $G\setminus V^\prime=G[V(G)\setminus V^\prime]$.
	For a subgraph $H$ of $G$, we denote $G\setminus H=G\setminus V(H)$.
	Let $\mxcli{G}$ be the size of a maximum clique of $G$.
	We have the following simple observation.
	\begin{observation}
		\label{obs:colors}
		Let $G$ be a graph with a list $\verlist \colon V(G)\to 2^\colorset$.
		If $G$ has an $\verlist$-coloring, then $\mxcli{G}\le |\colorset|$.
	\end{observation}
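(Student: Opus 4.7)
The plan is to prove the contrapositive observation directly by looking at what happens on a maximum clique. Let $K$ be a maximum clique of $G$, so $|V(K)| = \mxcli{G}$, and suppose $\mapf$ is any $\verlist$-coloring of $G$. Since every pair of distinct vertices $u, v \in V(K)$ is joined by an edge, the definition of a proper coloring forces $\mapf(u) \neq \mapf(v)$. Hence the restriction of $\mapf$ to $V(K)$ is an injection into $\colorset$, which immediately gives $\mxcli{G} = |V(K)| \le |\colorset|$.

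There is no real obstacle here: the argument is just a one-line pigeonhole on a maximum clique, using only the definitions of proper coloring (distinct colors on adjacent vertices) and $\verlist$-coloring (colors drawn from $\colorset$). I would present it in essentially the two sentences above, without any case analysis or auxiliary construction.
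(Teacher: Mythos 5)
Your argument is correct: restricting any $\verlist$-coloring to a maximum clique gives an injection into $\colorset$, which immediately yields $\mxcli{G}\le |\colorset|$. The paper states this as a simple observation without an explicit proof, and your pigeonhole-on-a-clique argument is plainly the intended one (minor nit: you announce a contrapositive but in fact give the direct proof, which is fine).
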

	
	A graph is \emph{split} if its vertex set can be partitioned into a clique and an independent set.
	A graph is a \emph{cograph} (or a $P_4$\emph{-free} graph) if it contains no induced path with four vertices.
	
	\subsection{Modules and modular decomposition}
	\label{sec:MD}
	
	\begin{figure}[t]
		\begin{center}
			\includegraphics[width=0.85\textwidth,clip]{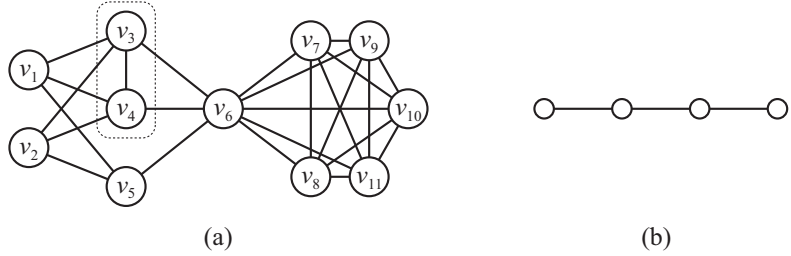}
		\end{center}
		\caption{(a) An example of module and (b) a prime.}
		\label{fig:module}
	\end{figure}
	\begin{figure}[t]
		\begin{center}
			\includegraphics[width=\textwidth,clip]{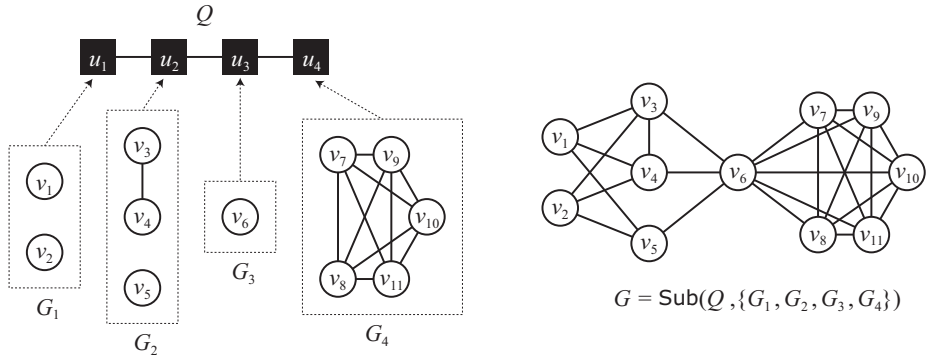}
		\end{center}
		\caption{An example of substitution operation.}
		\label{fig:sub}
	\end{figure}
	
	A \emph{module} of a graph $G=(V,E)$ is a vertex subset $M\subseteq V$ such that $\Neigh{G}{v}\setminus M=\Neigh{G}{w}\setminus M$ for every two vertices $v$ and $w$ in $M$.
	In other words, the module $M$ is the set of vertices whose neighborhoods in $G\setminus M$ are the same.
	For example, the graph in \figurename~\ref{fig:module}(a) has a module $M=\hset{v_3,v_4}$ for which $\Neigh{G}{v_3}\setminus M=\Neigh{G}{v_4}\setminus M=\hset{v_1,v_2,v_6}$ holds.
	Note that the vertex set $V$ of $G$, the set consisting of only a single vertex, and the empty set $\emptyset$ are all modules of $G$; they are called \emph{trivial}.
	A graph $G$ is a \emph{prime} if all of its modules are trivial; for an example, see \figurename~\ref{fig:module}(b).
	
	\begin{figure}[t]
		\begin{center}
			\includegraphics[width=\textwidth,clip]{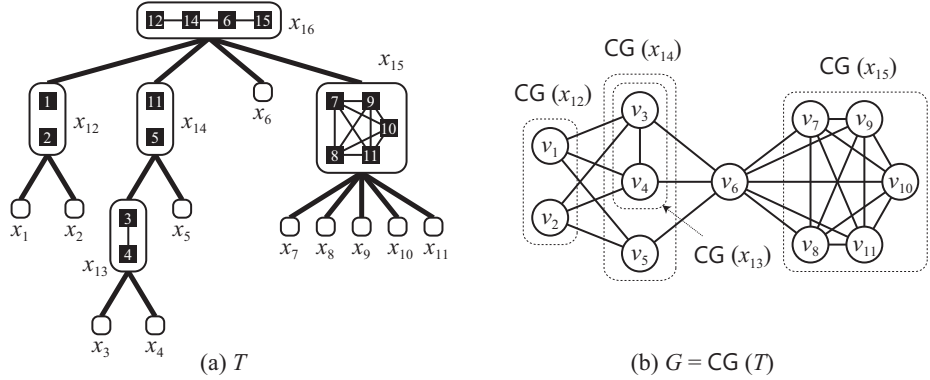}
		\end{center}
		\caption{(a) A substitution tree $T$ for (b) a graph $G$.}
		\label{fig:MDtree}
	\end{figure}
	
	\medskip
	We now introduce the notion of modular decomposition, which was first presented by Gallai in 1967 as a graph decomposition technique~\cite{Gallai67}.
	For a survey, see, e.g.,~\cite{HP10}.
	
	We first define the \emph{substitution} operation, which constructs one graph from more than one graphs.
	Let $Q$ be a graph, called a {\em quotient graph}, consisting of $p$ $(\ge 2)$ nodes $u_1,u_2,\ldots,u_p$, and let $\seq{F}=\hset{G_1,G_2,\ldots,G_p}$ be a family of vertex-disjoint graphs such that $G_i$ corresponds to $u_i$ for every $i\in \hset{1,2,\ldots,p}$.
	The $Q$\emph{-substitution} of $\seq{F}$, denoted by $\SUB{Q}{\seq{F}}$,  is the graph which is obtained by taking a union of all graphs in $\seq{F}$ and then connecting every pair of vertices $v\in V(G_i)$ and $w\in V(G_j)$ by an edge if and only if $u_i$ and $u_j$ are adjacent in $Q$.
	That is, the vertex set of $\SUB{Q}{\seq{F}}$ is $\bigcup \hset{V(G_i)\colon G_i\in \seq{F}}$, and the edge set of $\SUB{Q}{\seq{F}}$ is the union of $\bigcup \hset{E(G_i)\colon G_i\in \seq{F}}$ and $\hset{vw \colon v\in V(G_i),w \in V(G_j), u_i u_j \in E(Q)}$.
	(See \figurename~\ref{fig:sub} as an example.)
	
	A \textit{substitution tree} is a rooted tree $T$ such that each non-leaf node $x\in V(T)$ is associated with a quotient graph $\Quo{x}$ and has $|V(\Quo{x})|$ child nodes.
	For each node $x\in V(T)$, we can recursively define the \emph{corresponding graph} $\coG{x}$ as follows:
	If $x$ is a leaf, $\coG{x}$ consists of a single vertex.
	Otherwise,  let $y_1,y_2,\ldots,y_p$ be $p=|V(\Quo{x})|$ children of $x$, then $\coG{x}=\SUB{\Quo{x}}{\hset{\coG{y_1},\coG{y_2},\ldots,\coG{y_p}}}$.
	For the root $r$ of $T$, $\coG{r}$ is called the \emph{corresponding graph of} $T$, and we denote $\coG{T}:=\coG{r}$.
	We say that $T$ is a substitution tree \emph{for} a graph $G$ if $\coG{T}=G$, and refer to a \emph{node} in $T$ in order to distinguish it from a vertex in $G$.
	Figure~\ref{fig:MDtree}(a) illustrates a substitution tree for the graph $G$ in \figurename~\ref{fig:MDtree}(b);
	each leaf $x_i$, $i\in \hset{1,2,\ldots,11}$, corresponds to the subgraph of $G$ consisting of a single vertex $v_i$.
	We note that the vertex set $V(\coG{x})$ of each corresponding graph $\coG{x}$, $x\in V(T)$, forms a module of $\coG{T}$.
	
	A \emph{modular decomposition tree} $T$ (an \emph{MD-tree} for short) for a graph $G$ is a substitution tree for $G$ which satisfies the following three conditions:
	\begin{itemize}
		\item Each node $x\in V(T)$ applies to one of the following three types:
		\begin{itemize}
			\item a \emph{series} node, whose quotient graph $\Quo{x}$ is a complete graph;
			\item a \emph{parallel} node, whose quotient graph $\Quo{x}$ is an edge-less graph; and
			\item a \emph{prime} node, whose quotient graph $\Quo{x}$ is a prime with at least four vertices.
		\end{itemize}
		\item No edge connects two series nodes.
		\item No edge connects two parallel nodes.
	\end{itemize}
	It is known that any graph $G$ has a unique MD-tree with $O(|V(G)|)$ nodes, and it can be computed in time $O(|V(G)|+|E(G)|)$~\cite{MM05}.
	We denote by $\MD{G}$ the unique MD-tree for a graph $G$.
	The \emph{modular-width} $\mw{G}$ of a graph $G$ is the maximum number of children of a prime node in its MD-tree $\MD{G}$.
	The substitution tree $T$ in \figurename~\ref{fig:MDtree}(a) is indeed the MD-tree for the graph $G$ in \figurename~\ref{fig:MDtree}(b), and hence $\mw{G}=4$;
	note that only $x_{16}$ is a prime node in $T$.
	
	\medskip
	We now define a variant of MD-trees, which will make our proofs and analyses simpler.
	A \emph{pseudo modular decomposition tree} $T$ (a \emph{PMD-tree} for short) for a graph $G$ is a substitution tree for $G$ which satisfies the following two conditions:
	\begin{itemize}
		\item Each node $x\in V(T)$ applies to one of the following three types:
		\begin{itemize}
			\item a $2$\emph{-join} node, whose quotient graph $\Quo{x}$ is a complete graph with exactly two vertices;
			\item a \emph{parallel} node, whose quotient graph $\Quo{x}$ is an edge-less graph; and
			\item a \emph{prime} node, whose quotient graph $\Quo{x}$ is a prime with at least four vertices.
		\end{itemize}
		\item No edge connects two parallel nodes.
	\end{itemize}
	\begin{proposition}
		\label{prop:RMD}
		For any graph $G$, there exists a PMD-tree $T$ with $O(|V(G)|)$ nodes such that each prime node $x\in V(T)$ has at most $\mw{G}$ children, and it can be constructed in polynomial time.
	\end{proposition}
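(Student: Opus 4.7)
The plan is to begin with the unique modular decomposition tree $\MD{G}$, which has $O(|V(G)|)$ nodes and is constructible in linear time~\cite{MM05}, and then perform a local surgery that replaces each series node by a binary chain of $2$-join nodes, leaving every prime and every parallel node untouched.

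Concretely, for each series node $x$ of $\MD{G}$ with children $y_1, y_2, \ldots, y_p$ (so $\Quo{x}$ is the complete graph on $p\ge 2$ vertices), I would introduce $p-1$ fresh $2$-join nodes $z_1, z_2, \ldots, z_{p-1}$, set the children of $z_1$ to be $y_1$ and $y_2$, and for every $i\ge 2$ set the children of $z_i$ to be $z_{i-1}$ and $y_{i+1}$; finally $z_{p-1}$ inherits the parent edge of $x$. A straightforward induction on $i$ using the definition of $\SUB{\cdot}{\cdot}$ shows that $\coG{z_i}$ is the complete join of $\coG{y_1},\ldots,\coG{y_{i+1}}$, so in particular $\coG{z_{p-1}} = \coG{x}$. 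Hence, after all these surgeries the resulting tree $T$ still satisfies $\coG{T}=G$.

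Next I would verify that $T$ is a PMD-tree with the required properties. By construction, every non-leaf node is a $2$-join, a parallel, or a prime node. Since in $\MD{G}$ no series node is adjacent to another series node, the parent and each child of every transformed $x$ is non-series, i.e.\ parallel, prime, or a leaf; after the surgery such nodes only gain $2$-join neighbours (the chain endpoints), so no new parallel-parallel adjacency is ever created. Because $\MD{G}$ itself forbids two adjacent parallel nodes, so does $T$. Prime nodes are literally unchanged, so each one has the same children as in $\MD{G}$ and therefore at most $\mw{G}$ children by definition of modular-width.

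Finally, the total number of nodes added by the procedure is at most the sum of the numbers of children of series nodes in $\MD{G}$, which is bounded by the number of edges of $\MD{G}$, hence by $O(|V(G)|)$; together with the $O(|V(G)|)$ nodes of $\MD{G}$ this gives $|V(T)|=O(|V(G)|)$, and the whole transformation runs in polynomial time. I do not expect a real obstacle here: the only point that needs care is the verification that a left-associative chain of binary $2$-joins realises the complete join encoded by a single series node, which is immediate from the definition of the substitution operation.
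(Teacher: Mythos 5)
Your proposal is correct and takes essentially the same approach as the paper: both start from the unique MD-tree and replace each series node by a binary chain of $2$-join nodes, then observe that no new parallel--parallel adjacency is created and that prime nodes (and hence the bound $\mw{G}$ on their number of children) are untouched. The only cosmetic difference is the orientation of the chain (the paper attaches $y_i$ and $x_{i+1}$ as the two children of $x_i$, so its chain is right-associative with root $x_1$, whereas yours is left-associative with root $z_{p-1}$); your write-up also spells out the induction verifying $\coG{z_{p-1}}=\coG{x}$ and the node-count bound, which the paper leaves implicit.
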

	\begin{proof}
		Recall that an MD-tree $\MD{G}$ for a graph $G$ can be constructed in linear time. 
		Given an MD-tree $\MD{G}$ for a graph $G$, we thus construct a PMD-tree $T$ such that $\coG{T}=\coG{\MD{G}}$ as follows.
		For each series node $x$ of $\MD{G}$ having $m$ $(\ge 3)$ children $y_1,y_2,\ldots,y_m$, we replace it with a binary tree consisting of $m-1$ nodes $x_1,x_2,\ldots,x_{m-1}$ such that $x_i$ has two children $y_i$ and $x_{i+1}$ for each $i\in \hset{1,2,\ldots,m-2}$ and $x_{m-1}$ has two children $y_{m-1}$ and $y_m$.
		A quotient graph $\Quo{x_i}$ of each new node $x_i$ is defined as a complete graph with exactly two vertices.
		Then, $T$ is a PMD-tree for $G$, it has at most $O(|V(G)|)$ nodes, and each prime node $x\in V(T)$ has at most $\mw{G}$ children.
		Moreover, this process can be done in time polynomial in $|V(\MD{G})|=O(|V(G)|)$.
		\qed
	\end{proof}
	
	\medskip
	We denote by $\PMD{G}$ a substitution tree for $G$ such that each prime node $x\in V(T)$ has at most $\mw{G}$ children.
	The \emph{pseudo modular-width} $\pmw{G}$ of a graph $G$ is the maximum number of children of a non-parallel node in its PMD-tree.
	Notice that $\pmw{G}=\max \hset{2,\mw{G}}$ holds.

	\subsection{Other notation}
	
	Let $G$ be a graph, and let $\verlist \colon V(G)\to 2^\colorset$ be a list.
	For two $\verlist$-colorings $\mapf$ and $\mapf^\prime$ of a graph $G=(V,E)$, we define the \emph{difference} $\diff{\mapf}{\mapf^\prime}$ \emph{between} $\mapf$ \emph{and} $\mapf^\prime$ as the set $\hset{v \in V \colon \mapf(v) \neq \mapf^\prime(v)}$.
	Notice that $\mapf$ and $\mapf^\prime$ are adjacent if and only if $|\diff{\mapf}{\mapf^\prime}|=1$.
	
	We express an instance $\Inst$ of \lcr ~by a $4$-tuple $(G,\verlist,\mapf_{\ini},\mapf_{\tar})$ consisting of a graph $G$, a list $\verlist$, and \emph{initial} and \emph{target} $\verlist$-colorings $\mapf_{\ini}$ and $\mapf_{\tar}$ of $G$.
	
	Finally, we introduce a notion of ``restriction'' of mappings and instances.
	Consider an arbitrary mapping $\mu \colon V(G)\to S$, where $G$ is a graph and $S$ is any set.
	For a subgraph $H$ of $G$, we denote by $\rest{\mu}{H}$ the \emph{restriction} of $\mu$ on $V(H)$, that is, $\rest{\mu}{H}$ is a mapping from $V(H)$ to $S$ such that $\rest{\mu}{H}(v)=\mu(v)$ for each vertex $v\in V(H)$.
	Let $\Inst=(G,\verlist,\mapf_{\ini},\mapf_{\tar})$ be an instance of \lcr.
	For a subgraph $H$ of $G$, we define the \emph{restriction} $\rest{\Inst}{H}$ of $\Inst$ (on $H$) as the instance $(H,\rest{\verlist}{H},\rest{\mapf_{\ini}}{H},\rest{\mapf_{\tar}}{H})$ of \lcr.
	Notice that $\rest{\mapf_{\ini}}{H}$ and $\rest{\mapf_{\tar}}{H}$ are proper $\rest{\verlist}{H}$-colorings of $H$.

	%
	%
	
	
	\section{Fixed-Parameter Algorithm for Bounded Modular-Width Graphs}
	\label{sec:FPT}
	
	The following is our main theorem of this section.
	\begin{theorem}
	\label{the:bmw}
	\textsc{List coloring reconfiguration} is fixed-parameter tractable when parameterized by $k+\MW$, where $k$ and $\MW$ are the upper bounds on the size of the color set and the modular-width of an input graph, respectively.
	\end{theorem}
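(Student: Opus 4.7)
The plan is to perform bottom-up dynamic programming on the PMD-tree $T=\PMD{G}$ of the input graph, which exists, has $O(|V(G)|)$ nodes, and whose branching at prime and $2$-join nodes is controlled by $\pmw{G}=\max\{2,\MW\}$ by Proposition~\ref{prop:RMD}. At each node $x$ of $T$, with corresponding subgraph $\coG{x}$, and for each \emph{external colour-block} $F\subseteq \colorset$, I would maintain a compact summary of the reconfiguration graph $\Rgraph{x}{F}$ of $\coG{x}$ under the restricted list $v\mapsto \verlist(v)\setminus F$. Intuitively, $F$ represents the set of colours presently occupied by external neighbours of the module $V(\coG{x})$ in a hypothetical embedding; since $|\colorset|=k$ is a parameter, there are only $2^k$ choices of $F$ per node.

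The summary records, for each achievable colour-usage pattern $S=f(V(\coG{x}))\subseteq \colorset\setminus F$ and for each ``internal equivalence class'' of colourings sharing that pattern, which other (pattern, class)-pairs are reachable inside $\Rgraph{x}{F}$. The key technical invariant, to be proved by induction along $T$, is that the number of relevant classes at every node is bounded by a function of $k+\pmw{G}$, so that the summary has size $h(k,\pmw{G})$ for some computable $h$. Combination at internal nodes is driven by the quotient graph $\Quo{x}$. At a \emph{parallel} node the summary of $x$ is a direct product of the children's summaries at the same $F$, because reconfigurations in pairwise non-adjacent modules are independent. At a \emph{$2$-join} node with children $y_1,y_2$, I view a reconfiguration in $\coG{x}$ as a sequence of phases in which only one child is being recoloured while the other's colour-usage is frozen; during a phase involving $y_i$, the effective forbidden set is $F$ together with the frozen usage of the sibling, and the allowed moves are read off from $y_i$'s summary. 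Building a meta-reconfiguration graph whose nodes are compatible tuples (class of $y_1$, class of $y_2$, usage pair) and whose edges correspond to single phases then yields the summary of $x$. At a \emph{prime} node, where $|V(\Quo{x})|\le \pmw{G}$, I enumerate the $O(k^{\pmw{G}})$ colour-usage tuples compatible with the adjacencies of $\Quo{x}$ and assemble the meta-graph analogously. The yes/no answer for the instance is read at the root by checking whether $\mapf_{\ini}$ and $\mapf_{\tar}$ fall in the same component of the root's meta-graph for $F=\emptyset$.

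The main obstacle will be the correctness of this phase-based combination, which rests on a \emph{rearrangement lemma}: every reconfiguration sequence in $\coG{x}$ can, without losing intermediate feasibility, be transformed into one consisting of phases in which only one child module is being recoloured at a time. The subtlety is that in a genuine reconfiguration of $\coG{x}$, single-vertex recolourings in different children can be arbitrarily interleaved and the colour-usages of distinct modules can oscillate many times; proving that any such sequence is equivalent to a phase decomposition requires exploiting the module property (external neighbours of a module all see the same colour-usage) together with the information already captured in the children's summaries. Once this lemma is established, induction along $T$ yields an algorithm of running time $h(k+\pmw{G})\cdot\mathrm{poly}(|V(G)|)$, which is FPT with parameter $k+\MW$ as claimed.
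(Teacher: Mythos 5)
Your plan is a bottom-up dynamic program over the PMD-tree, which is a genuinely different strategy from the paper's, which is a \emph{kernelization}: the paper repeatedly applies a reduction rule (Lemma~\ref{lem:reduce}) that deletes one of two ``identical'' sibling modules (same induced subgraph, same lists, same initial/target colours, same external neighbourhood), detects such pairs via ID-matrices while walking the PMD-tree, proves the resulting graph $G'$ has $|V(G')|\le \kernel{\mxcli{G'}}$ with $\mxcli{G'}\le k$ (Lemma~\ref{lem:kernel}), and then solves the shrunken instance by brute force. Your route avoids shrinking the graph and instead tries to summarize the reconfiguration graph of each $\coG{x}$ into a bounded-size object. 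These are not two presentations of the same idea.

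The gap in your proposal is precisely the step you flag as ``the key technical invariant, to be proved by induction'': that the number of relevant internal equivalence classes at every node is bounded by $h(k,\pmw{G})$. You never define what a class is, and at a \emph{parallel} node this invariant is in real trouble. You correctly observe that the summary at a parallel node $x$ with children $y_1,\ldots,y_m$ is the direct product of the children's summaries; but $m$ is not bounded by any function of $k+\pmw{G}$ (parallel nodes may have arbitrarily many children), so the product of $m$ summaries of size $\ge 2$ has size $\ge 2^m$, and there is no apparent compression that depends only on $k+\pmw{G}$. What actually saves the day is that siblings of a parallel node that look the same (same ID-matrix) can be safely removed without changing reconfigurability, which bounds the \emph{number} of children that matter by a function of $k$ and $\pmw{G}$; this is exactly the paper's Lemma~\ref{lem:reduce} and Lemma~\ref{lem:IDM}, and your plan does not contain it. Without that deduplication step, the induction you want does not close. (As a secondary remark, your worry about a ``rearrangement lemma'' for $2$-join and prime nodes is misplaced: since a single reconfiguration step recolours exactly one vertex and each vertex lies in exactly one child, every sequence already decomposes into maximal one-child phases; the genuine difficulty is not rearranging the sequence but bounding the number of summary states so the meta-graph is small, and that is the parallel-node problem above.)
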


	Because it is known that any cograph has modular-width zero, we have the following result as a corollary of Theorem~\ref{the:bmw}. 
	\begin{corollary}
	\label{cor:cograph}
	\textsc{List coloring reconfiguration} is fixed-parameter tractable for cographs when parameterized by the size $k$ of the color set. 
	\end{corollary}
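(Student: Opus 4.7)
The plan is to verify that every cograph has modular-width zero and then directly invoke Theorem~\ref{the:bmw}. First I would recall the classical characterization of cographs by modular decomposition: a graph $G$ is $P_4$-free if and only if its MD-tree $\MD{G}$ contains only leaves, series nodes, and parallel nodes; equivalently, $G$ can be built from single vertices using only disjoint unions (parallel compositions) and joins (series compositions). This follows from the fact that any prime quotient graph on at least four vertices contains an induced $P_4$, so the presence of a prime node in $\MD{G}$ would force an induced $P_4$ in $G$, contradicting the cograph property; conversely, every graph built from unions and joins avoids $P_4$.

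Second, since $\MD{G}$ has no prime nodes when $G$ is a cograph, the parameter $\mw{G}$, which is defined as the maximum number of children of a prime node in $\MD{G}$, is a maximum taken over the empty set and therefore equals $0$. Hence for every cograph $G$ we have $\mw{G}=0$.

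Finally, I would feed this bound into Theorem~\ref{the:bmw}: its algorithm runs in time $f(k+\MW)\cdot \mathrm{poly}(|V(G)|)$ for some computable function $f$, where $\MW$ is an upper bound on the modular-width of the input graph. Taking $\MW=0$ on cograph inputs, the running time becomes $f(k)\cdot \mathrm{poly}(|V(G)|)$, which is exactly an FPT algorithm parameterized by $k$ alone. There is no genuine obstacle in this argument: the only step requiring care is the structural fact that cographs are precisely the graphs whose MD-trees have no prime nodes, and this is a standard and well-documented property of the modular decomposition, so the corollary follows immediately from Theorem~\ref{the:bmw}.
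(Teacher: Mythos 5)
Your proposal is correct and matches the paper's own argument exactly: the paper derives Corollary~\ref{cor:cograph} from Theorem~\ref{the:bmw} by citing the known fact that cographs have modular-width zero. Your expansion of why this holds (MD-trees of $P_4$-free graphs contain no prime nodes) is accurate and simply fills in the standard structural fact the paper leaves implicit.
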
	
	
	Recall that $\pmw{G}=\max \hset{2,\mw{G}}$, and hence $\pmw{G} \le \mw{G}+2$.
	Therefore, as a proof of Theorem~\ref{the:bmw}, it suffices to give a fixed-parameter algorithm for \lcr ~with respect to $k+\PMW$, where $\PMW$ is an upper bound on $\pmw{G}$. 
	\subsection{Reduction rule}
\label{sec:reduce}

\begin{figure}[t]
	\begin{center}
		\includegraphics[width=0.75\textwidth,clip]{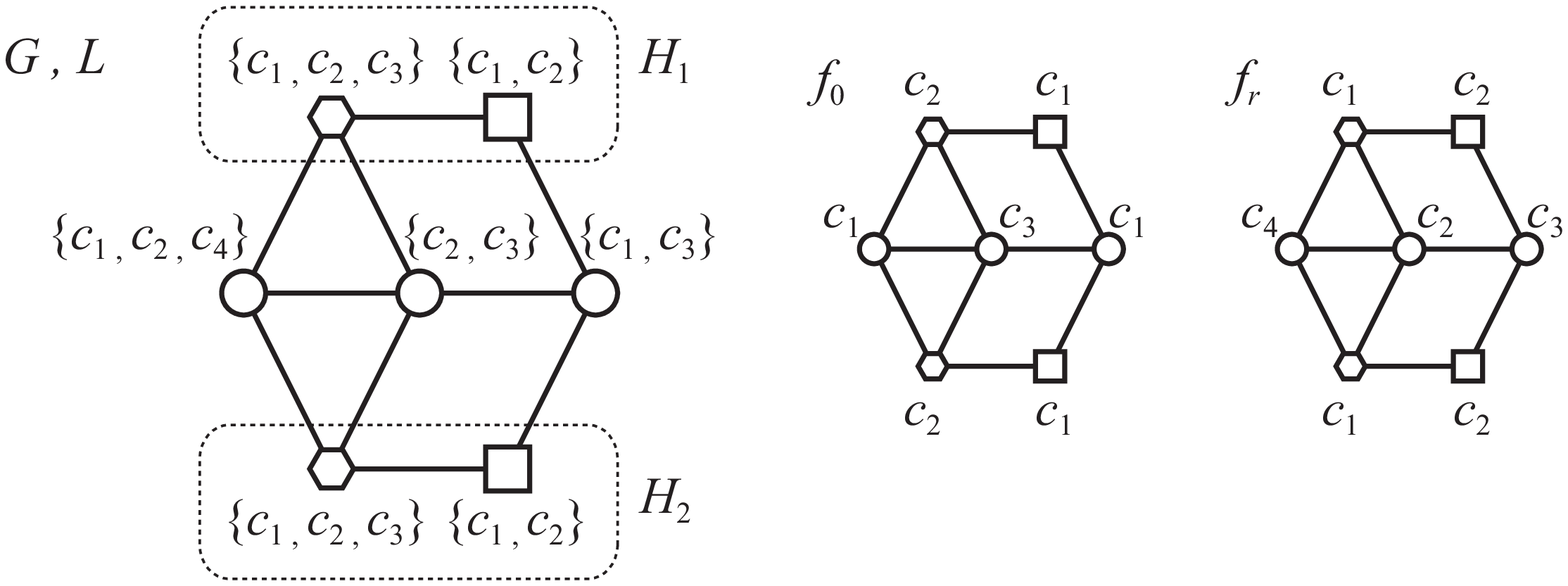}
	\end{center}
	\caption{An instance $\Inst=(G,\verlist,\mapf_{\ini},\mapf_{\tar})$ of \lcr , and two identical subgraphs $H_1$ and $H_2$.}
	\label{fig:identical}
\end{figure}

	In this subsection, we give a useful lemma, which compresses an input graph into a smaller graph with keeping the reconfigurability.

	Let $\Inst=(G,\verlist,\mapf_{\ini},\mapf_{\tar})$ be an instance of \lcr.
	For each vertex $v\in V(G)$, we define a \emph{vertex assignment} $\asgn{v}$ as a triple $(\verlist(v),\mapf_{\ini}(v),\mapf_{\tar}(v))$ consisting of a list, and initial and target color assignments of $v$.
	Let $H_1$ and $H_2$ be two induced subgraphs of $G$ such that $|V(H_1)|=|V(H_2)|$ and $V(H_1)\cap V(H_2)=\emptyset$
	Then, $H_1$ and $H_2$ are \emph{identical} (on $\Inst$) if there exists a bijective function $\phi \colon V(H_1) \to V(H_2)$ which satisfies all the following two conditions:
	\begin{enumerate}
		\item $H_1$ and $H_2$ are isomorphic under $\phi$, that is, $vw\in E(H_1)$ if and only if $\phi(v)\phi(w)\in E(H_2)$.
		\item For all vertices $v\in V(H_1)$,
		\begin{enumerate}
			\item $N(G,v)\setminus V(H_1)=N(G,\phi(v))\setminus V(H_2)$; and
			\item $\asgn{v}=\asgn{\phi(v)}$, that is, $\verlist(v)=\verlist(\phi(v))$, $\mapf_{\ini}(v)=\mapf_{\ini}(\phi(v))$ and $\mapf_{\tar}(v)=\mapf_{\tar}(\phi(v))$.
		\end{enumerate}
	\end{enumerate}

	\noindent
	We note that the condition 2(a) implies that there is no edge between $H_1$ and $H_2$.
	Figure~\ref{fig:identical} shows an example of identical subgraphs $H_1$ and $H_2$ on $\Inst=(G,\verlist,\mapf_{\ini},\mapf_{\tar})$, where the bijective function maps each vertex in $H_1$ to a vertex in $H_2$ with the same shape.
	
	We now prove the following key lemma, which holds for any graph.
	\begin{lemma}
	\label{lem:reduce}
	{\rm (Reduction rule)}
	Let $\Inst=(G,\verlist,\mapf_{\ini},\mapf_{\tar})$ be an instance of \lcr, and let $H_1$ and $H_2$ be two identical subgraphs of $G$.
	Then, $\rest{\Inst}{G\setminus H_2}$ is a yes-instance if and only if $\Inst$ is.
	\end{lemma}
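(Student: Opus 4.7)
The plan is to establish the two implications separately, using the bijection $\phi\colon V(H_1)\to V(H_2)$ witnessing that $H_1$ and $H_2$ are identical as the central tool to mirror colorings across the two copies. For the ``only if'' direction, given a reconfiguration sequence $\hseq{\mapf_0,\mapf_1,\ldots,\mapf_\ell}$ for $\Inst$, I would take the sequence of restrictions $\hseq{\rest{\mapf_0}{G\setminus H_2},\rest{\mapf_1}{G\setminus H_2},\ldots,\rest{\mapf_\ell}{G\setminus H_2}}$ and delete consecutive duplicates, which are precisely the steps that recolor a vertex inside $V(H_2)$. Each restriction is a proper $\rest{\verlist}{G\setminus H_2}$-coloring because properness is inherited by induced subgraphs, and two consecutive distinct restrictions differ in exactly one vertex, so the compressed sequence witnesses that $\rest{\Inst}{G\setminus H_2}$ is a yes-instance.

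For the ``if'' direction, suppose $\hseq{\mapg_0,\mapg_1,\ldots,\mapg_\ell}$ is a reconfiguration sequence for $\rest{\Inst}{G\setminus H_2}$. I would extend each $\mapg_i$ to a map $\mapf_i\colon V(G)\to\colorset$ by setting $\mapf_i(v)=\mapg_i(v)$ for $v\in V(G)\setminus V(H_2)$ and $\mapf_i(v)=\mapg_i(\phi^{-1}(v))$ for $v\in V(H_2)$, so that $H_2$ always carries the $\phi$-mirror of the coloring on $H_1$. Condition 2(b) of the definition of identicalness then gives $\mapf_0=\mapf_\ini$, $\mapf_\ell=\mapf_\tar$, and $\mapf_i(v)\in\verlist(v)$ throughout; the isomorphism condition 1 handles edges inside $H_2$, and condition 2(a) (which asserts $\Neigh{G}{v}\setminus V(H_1)=\Neigh{G}{\phi(v)}\setminus V(H_2)$) handles edges from $V(H_2)$ to $V(G)\setminus V(H_2)$, so each $\mapf_i$ is a proper $\verlist$-coloring of $G$.

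The only subtlety, which I expect to be the main obstacle, is that $\mapf_i$ and $\mapf_{i+1}$ need not be adjacent: if the unique vertex $v^\ast$ on which $\mapg_i$ and $\mapg_{i+1}$ differ lies in $V(H_1)$, then both $v^\ast$ and $\phi(v^\ast)$ change simultaneously. I would deal with this by inserting an intermediate coloring $\mapf_i'$ that first recolors $\phi(v^\ast)$ to $\mapg_{i+1}(v^\ast)$ and leaves $v^\ast$ at its old value, thereby splitting the mirrored step into two single-vertex recolorings. The properness of $\mapf_i'$ is exactly the non-bookkeeping part: I must verify that $\mapg_{i+1}(v^\ast)$ conflicts with none of the neighbors of $\phi(v^\ast)$, which reduces, via condition 2(a) for the external neighbors and the $\phi$-isomorphism for the internal neighbors, to the fact that $\mapg_{i+1}$ is itself a proper $\rest{\verlist}{G\setminus H_2}$-coloring at $v^\ast$. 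Once this verification is in place, the extended sequence (of length at most $2\ell$) witnesses that $\Inst$ is a yes-instance, completing the proof.
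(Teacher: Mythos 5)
Your proposal is correct and follows essentially the same two-step structure as the paper's proof: restrict and compress consecutive duplicates in one direction, and in the other direction extend each $\mapg_i$ by $\phi$-mirroring the colors on $H_1$ onto $H_2$, then split any simultaneous double-recoloring of $v^\ast$ and $\phi(v^\ast)$ by inserting an intermediate coloring (the paper recolors $v^\ast\in V(H_1)$ first and then $\phi(v^\ast)$, whereas you recolor $\phi(v^\ast)$ first, but both orders verify properness identically using conditions 1 and 2(a) of identicalness). The only minor slip is that your labels ``if'' and ``only if'' are swapped relative to the statement's phrasing, but the mathematical content of each direction is correct.
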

	\begin{proof}
		We assume that $H_1$ and $H_2$ are identical under a bijective function $\phi\colon V(H_1)\to V(H_2)$, and let $G^\prime=G\setminus H_2$.
		
		We first prove the if direction.
		Suppose that $\Inst$ is a yes-instance.
		Then, there exists a reconfiguration sequence $\hseq{\mapf_0, \mapf_1, \ldots, \mapf_\ell}$ for $\Inst$, where $\mapf_\ell=\mapf_\tar$.
		For each $i\in \hset{0,1,\ldots,\ell}$, since $\mapf_i$ is an $\verlist$-coloring of $G$, $\rest{\mapf_i}{G^\prime}$ is an $\rest{\verlist}{G^\prime}$-coloring of $G^\prime$.
		Therefore, by removing all consecutive duplicate $\rest{\verlist}{G^\prime}$-colorings, $\hseq{\rest{\mapf_0}{G^\prime}, \rest{\mapf_1}{G^\prime}, \ldots,\rest{\mapf_{\ell^\prime}}{G^\prime}}$ is a reconfiguration sequence for $\rest{\Inst}{G^\prime}$.
		Thus $\rest{\Inst}{G^\prime}$ is a yes-instance.
		
		We now prove the only-if direction.
		Suppose that $\rest{\Inst}{G^\prime}$ is a yes-instance.
		Then, there exists a reconfiguration sequence $\seq{S}^\prime=\hseq{\mapg_0, \mapg_1, \ldots, \mapg_\ell}$ for $\rest{\Inst}{G^\prime}$ with $\mapg_0=\rest{\mapf_\ini}{G^\prime}$ and $\mapg_\ell=\rest{\mapf_\tar}{G^\prime}$.
		Our goal is to construct a reconfiguration sequence $\seq{S}$ for $\Inst$ from $\seq{S}^\prime$.
		For each $i\in \hset{0,1,\ldots,\ell}$, we first extend $\mapg_i$ to $\widehat{\mapf}_i$ as follows:
		\[
		\widehat{\mapf}_i(v) =\left\{
		\begin{array}{ll}
		\mapg_i(\phi^{-1}(v)) & ~~~\mbox{if $v\in V(H_2)$}; \\
		\mapg_i(v) & ~~~\mbox{otherwise}.
		\end{array} \right.
		\]
		We claim that $\widehat{\mapf}_i$ is a proper $\verlist$-coloring of $G$.
		To show this, it suffices to check that $\widehat{\mapf}_i(v)\ne \widehat{\mapf}_i(w)$ holds for each $v\in V(H_2)$ and its neighbors $w\in \Neigh{G}{v}$.
		If $w\in V(H_2)$, $\widehat{\mapf}_i(v)=\mapg_i(\phi^{-1}(v))\ne \mapg_i(\phi^{-1}(w))=\widehat{\mapf}_i(w)$ holds because $\phi^{-1}(v)\phi^{-1}(w)\in E(G^\prime)$ and $\mapg_i$ is an $\rest{\verlist}{G^\prime}$-coloring.
		Otherwise, $\widehat{\mapf}_i(v)=\mapg_i(\phi^{-1}(v))\ne \mapg_i(w)=\widehat{\mapf}_i(w)$ holds because $\phi^{-1}(v)w\in E(G^\prime)$ and $\mapg_i$ is an $\rest{\verlist}{G^\prime}$-coloring.
		Therefore, the obtained sequence $\widehat{\Seq}$ consists only of $\verlist$-colorings of $G$.
		However, there may exist several indices $i\in \hset{0,1,\ldots,\ell-1}$ such that $\widehat{\mapf}_i$ and $\widehat{\mapf}_{i+1}$ are not adjacent, because $|\diff{\widehat{\mapf}_i}{\widehat{\mapf}_{i+1}}|>1$ may hold.
		Recall that $\mapg_i$ and $\mapg_{i+1}$ are adjacent for each $i\in \hset{0,1,\ldots,\ell-1}$, that is, $\diff{\mapg_i}{\mapg_{i+1}}=\hset{w}$ for some vertex $w\in V(G^\prime)$.
		If $w\notin V(H_1)$, $\diff{\widehat{\mapf}_i}{\widehat{\mapf}_{i+1}}=\hset{w}$ holds, and hence $\widehat{\mapf}_i$ and $\widehat{\mapf}_{i+1}$ are adjacent.
		Otherwise, $\diff{\widehat{\mapf}_i}{\widehat{\mapf}_{i+1}}=\hset{w,\phi(w)}$ holds, and hence $\widehat{\mapf}_i$ and $\widehat{\mapf}_{i+1}$ are not adjacent.
		In this case, between $\widehat{\mapf}_i$ and $\widehat{\mapf}_{i+1}$, we insert an $\verlist$-coloring $\widetilde{\mapf}_i$ of $G$ defined as follows:
		\[
		\widetilde{\mapf}_i(v) =\left\{
		\begin{array}{ll}
		\widehat{\mapf}_{i+1}(v) & ~~~\mbox{if $v=w$}; \\
		\widehat{\mapf}_i(v) & ~~~\mbox{if $v=\phi(w)$}; \\
		\widehat{\mapf}_i(v) & ~~~\mbox{otherwise}.
		\end{array} \right.
		\]
		Observe that $\widetilde{\mapf}_i$ is a proper $\verlist$-coloring of $G$.
		Moreover, both $\diff{\widehat{\mapf}_i}{\widetilde{\mapf}_i}=\hset{w}$ and $\diff{\widetilde{\mapf}_i}{\widehat{\mapf}_{i+1}}=\hset{\phi(w)}$ hold.
		Thus, we obtain a proper reconfiguration sequence $\seq{S}$ for $\Inst$ as claimed.
		\qed
	\end{proof}
	\subsection{Kernelization}
\label{sec:bmw}
	
	Let $\Inst=(G,\verlist,\mapf_{\ini},\mapf_{\tar})$ be an instance of \lcr.
	Suppose that the color set $C$ has at most $k$ colors, $G$ is a connected graph with $\pmw{G}\le \PMW$, and all vertices of $G$ are totally ordered according to an arbitrary binary relation $\prec$.
	
\subsubsection{Sufficient condition for identical subgraphs.}
	
	We first give a sufficient condition for which two nodes in a PMD-tree $\PMD{G}$ for $G$ correspond to identical subgraphs.
	Let $x\in V(\PMD{G})$ be a node, let $p:=|V(\coG{x})|$, and assume that all vertices in $V(\coG{x})$ are labeled as $v_1,v_2,\ldots,v_p$ according to $\prec$; that is, $v_i\prec v_j$ holds for each $i,j$ with $1\le i<j\le p$.
	Let $m \ge p$ be some integer which will be defined later.
	We now define an $(m+1)\times m$ matrix $\IDM{x}$ as follows:
	\[
	(\IDM{x})_{i,j}=\left\{
			\begin{array}{ll}
			1 & ~~~\mbox{if $i,j\le p$ and $v_i v_j\in E(\coG{x})$}; \\
			0 & ~~~\mbox{if $i,j\le p$ and $v_i v_j\notin E(\coG{x})$}; \\
			0 & ~~~\mbox{if $p < i \le m$ or $p < j \le m$}; \\
			\asgn{v_j} & ~~~\mbox{if $i=m+1$ and $j\le p$}; \\
			\emptyset & ~~~\mbox{otherwise},
			\end{array} \right.
	\]
	where $(\IDM{x})_{i,j}$ denotes an $(i,j)$-element of $\IDM{x}$.
	Notice that $\IDM{x}$ contains an adjacency matrix of $\coG{x}$ at its upper left $p\times p$ submatrix, and the bottommost row represents the vertex assignment of each vertex in $V(\coG{x})$.
	We call $\IDM{x}$ an $m$\emph{-ID-matrix} of $x$.
	For example, consider the node $x_{13}$ in \figurename~\ref{fig:MDtree}(a).
	Then, $p=2$, and a $4$-ID-matrix of $x_{13}$ is as follows:
	\begin{equation*}
		\IDMi{x_{13}}{4}=
		\begin{bmatrix}
			0 & 1 & 0 & 0 \\
			1 & 0 & 0 & 0 \\
			0 & 0 & 0 & 0 \\
			0 & 0 & 0 & 0 \\
			\asgn{x_3} & \asgn{x_4} & ~~\emptyset~~~ & ~~\emptyset~~~	
		\end{bmatrix}
	\end{equation*}
	\begin{lemma}
	\label{lem:IDM}
	Let $y_1$ and $y_2$ be two children of a parallel node $x$ in $\PMD{G}$, and let $m$ be an integer with $m\ge \max \hset{|V(\coG{y_1})|,|V(\coG{y_2})|}$.
	If $\IDM{y_1}=\IDM{y_2}$ holds, then $\coG{y_1}$ and $\coG{y_2}$ are identical.
	\end{lemma}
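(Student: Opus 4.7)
The plan is to exhibit an explicit bijection $\phi\colon V(\coG{y_1})\to V(\coG{y_2})$ and check the three conditions in the definition of identical subgraphs, reading most information directly off the two equal ID-matrices and invoking the module structure inherited from the PMD-tree only for the condition on external neighborhoods.

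First I would extract from $\IDM{y_1}=\IDM{y_2}$ the fact that $|V(\coG{y_1})|=|V(\coG{y_2})|=:p$: by definition the bottommost row is $\asgn{v_j}$ for $j\le p$ and $\emptyset$ for $j>p$, and since a vertex assignment is a triple (never equal to $\emptyset$), the value $p$ is determined by this row. Labeling the vertices of $\coG{y_1}$ and $\coG{y_2}$ in $\prec$-order as $v_1,\ldots,v_p$ and $v_1',\ldots,v_p'$ respectively, I would define $\phi(v_i):=v_i'$.

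The isomorphism condition and the vertex-assignment condition are then immediate. The upper-left $p\times p$ block of each ID-matrix is precisely the adjacency matrix of the corresponding graph in the above ordering, so the equality of these blocks yields $v_iv_j\in E(\coG{y_1})$ if and only if $v_i'v_j'\in E(\coG{y_2})$; equality of the bottommost rows yields $\asgn{v_i}=\asgn{v_i'}$ for every $i\le p$.

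The only condition requiring a genuine argument is 2(a) on external neighborhoods, and this is where the hypothesis that $x$ is a \emph{parallel} node enters. Since $V(\coG{y_1})$ and $V(\coG{y_2})$ are modules of $G=\coG{\PMD{G}}$ (each corresponding graph of a node in a substitution tree forms a module of the whole), the external neighborhood $\Neigh{G}{v}\setminus V(\coG{y_1})$ does not depend on the particular choice of $v\in V(\coG{y_1})$, and similarly for $\coG{y_2}$. Moreover, because $x$ is parallel, $\Quo{x}$ is edge-less, so there are no edges between $V(\coG{y_1})$ and $V(\coG{y_2})$ in $\coG{x}$; both external neighborhoods therefore coincide with the common neighborhood of $V(\coG{x})$ outside $V(\coG{x})$, giving $\Neigh{G}{v}\setminus V(\coG{y_1})=\Neigh{G}{\phi(v)}\setminus V(\coG{y_2})$. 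I expect this last step to be the main (though still brief) obstacle, since it is the only place where the tree structure of the decomposition is used rather than just the matrix equality.
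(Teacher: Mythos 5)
Your proposal is correct and follows essentially the same route as the paper: read off $p$ from the non-$\emptyset$ entries of the bottom row, take $\phi$ to be the $\prec$-order bijection, match conditions 1 and 2(b) directly from the matrix blocks, and for 2(a) combine the fact that $x$ is parallel (so vertices of $\coG{y_1}$ have no neighbors elsewhere inside $\coG{x}$) with the module property of $V(\coG{x})$ to identify both external neighborhoods with the common neighborhood of $V(\coG{x})$.
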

	\begin{proof}
		Let $p_1:=|V(\coG{y_1})|$ and $p_2:=|V(\coG{y_2})|$.
		Observe that $(\IDM{y_1})_{m+1,j}\allowbreak \ne \emptyset$ if and only if $j\le p_1$, and $(\IDM{y_2})_{m+1,j}\ne \emptyset$ if and only if $j\le p_2$.
		By the assumption that $(\IDM{y_1})_{m+1,j}=(\IDM{y_2})_{m+1,j}$ for all $j\in \hset{1,2,\ldots,m}$, we have $p_1=p_2$; we denote by $p$ this value.
		
		We now check that $\coG{y_1}$ and $\coG{y_2}$ are identical.
		The condition~1 of identical subgraphs holds, because the upper left $p \times p$ submatrices in $\IDM{y_1}$ and $\IDM{y_2}$ correspond to the adjacency matrices of $\coG{y_1}$ and $\coG{y_2}$, respectively.
		The condition~2(b) holds, because the bottommost rows are the same in $\IDM{y_1}$ and $\IDM{y_2}$.
		Finally, we claim that the condition~2(a) holds, as follows.
		Since $x$ is a parallel node, $N(G,v)\setminus V(\coG{y_1})=N(G,v)\setminus V(\coG{x})$ holds for all vertices $v$ in $\coG{y_1}$.
		Similarly, $N(G,w)\setminus V(\coG{y_2})=N(G,w)\setminus V(\coG{x})$ holds for all vertices $w$ in $\coG{y_2}$.
		Recall that $V(\coG{x})$ is a module of $G$, that is, $N(G,v)\setminus V(\coG{x})=N(G,v^\prime)\setminus V(\coG{x})$ holds for any vertice $v,v^\prime \in V(\coG{x})$.
		Therefore, $N(G,v)\setminus V(\coG{y_1})=N(G,w)\setminus V(\coG{y_2})$ holds any pair of $v \in V(\coG{y_1})$ and $w \in  V(\coG{y_2})$.
		Thus, the condition~2(a) holds. 
		\qed
	\end{proof}
	
\subsubsection{Kernelization algorithm.}
	
	We now describe how to kernelze an input instance.
	(See \figurename~\ref{fig:alg} as an example.)
	Our algorithm traverses a PMD-tree $\PMD{G}$ of $G$ by a depth-first search in post-order starting from the root of $\PMD{G}$, that is, the algorithm processes a node of $\PMD{G}$ after its all children are processed.
	
	\begin{figure}[h!]
		\begin{center}
			\includegraphics[width=0.8\textwidth,clip]{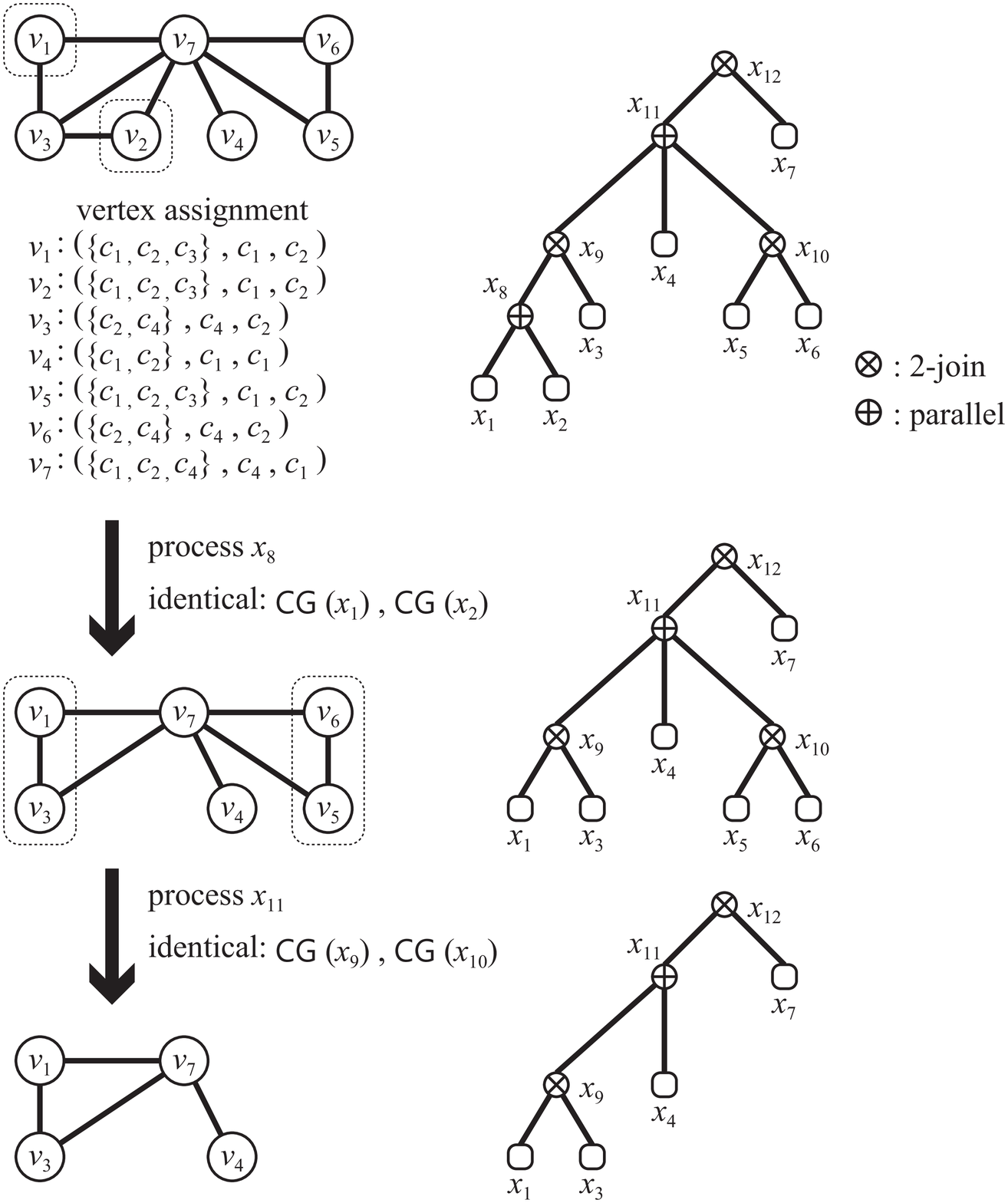}
		\end{center}
		\caption{An example of an application of our algorithm.
			We first focus on $x_8$, which is a parallel node whose children are already kernelized, and find that $\IDMi{x_1}{1}=\IDMi{x_2}{1}$ holds.
			Therefore, we delete $\coG{x_1}$ from the input graph.
			Then, $x_8$ has only one child now, and hence we contract an edge $x_8 x_9$ in order to maintain being a PMD-tree.
			We next focus on $x_{11}$ and find that $\IDMi{x_9}{2}=\IDMi{x_{10}}{2}$ holds.
			After removing $\coG{x_{10}}$ from the current graph and fixing a PMD-tree, we have processed all parallel nodes.}
		\label{fig:alg}
	\end{figure}
	
	Let $x\in V(\PMD{G})$ be a node which is currently visited.
	If $x$ is a non-parallel node, we do nothing.
	Otherwise (i.e., if $x$ is a parallel node,) let $Y$ be the set of all children of $x$, and let $m:=\max_{y\in Y}|V(\coG{y})|$.
	We first construct $m$-ID-matrices of all children of $x$.
	If there exist two nodes $y_1$ and $y_2$ such that $\IDM{y_1}=\IDM{y_2}$, then $\coG{y_1}$ and $\coG{y_2}$ are identical; and hence we remove $\coG{y_2}$ from $G$ by Lemma~\ref{lem:reduce}.
	Then, we modify $\PMD{G}$ in order to keep it still being a PMD-tree for the resulting graph as follows.
	We remove a subtree rooted at $y_2$ from $\PMD{G}$, and delete a node corresponding to $y_2$ from a quotient graph $\Quo{x}$ of $x$.	
	If this removal makes $x$ having only one child $y$ in the PMD-tree, we contract the edge $xy$ into a new node $x^\prime$ such that $\Quo{x^\prime}=\Quo{x}$.
		
	The running time of this kernelization can be estimated as follows.
	For each node $x\in V(\PMD{G})$, the construction of $m$-ID-matrices can be done in time $O(|Y|\cdot m^2)=O(|V(G)|^3)$.
	We can check if $\IDM{y_1}=\IDM{y_2}$ for each pair of children $y_1$ and $y_2$ of $x$ in time $O(m^2)=O(|V(G)|^2)$.
	Moreover, a modification of $\PMD{G}$, which follows an application of Lemma~\ref{lem:reduce}, can be done in polynomial time.
	Recall that the number of children of $x$ and the size of a PMD-tree $\PMD{G}$ are both bounded linearly in $|V(G)|$, and hence our kernelization can be done in polynomial time.


\subsubsection{Size of the kernelized instance.}
	
	We finally prove that the size of the obtained instance $\Inst^\prime=(G^\prime,\verlist^\prime,\mapf^\prime_{\ini},\mapf^\prime_{\tar})$ depends only on $k+\PMW$;
	recall that $\PMW$ is the upper bound on $\pmw{G}$.
	By Observation~\ref{obs:colors}, we can assume that the maximum clique size $\mxcli{G^\prime}$ is at most $k$.
	In addition, $G^\prime$ is connected since $G$ is connected and an application of Lemma~\ref{lem:reduce} does not affect the connectivity of the graph.
	Therefore, it suffices to prove the following lemma.
	\begin{lemma}
	\label{lem:kernel}
	The graph $G^\prime$ has at most $\kernel{\mxcli{G^\prime}}$ vertices, where $\kernel{i}$ is recursively defined for an integer $i\ge 1$ as follows{\rm :}
		\[
			\kernel{i} =\left\{
					\begin{array}{ll}
					1 & ~~~\mbox{if $i=1$}; \\
					\PMW \cdot \kernel{i-1} \cdot \sqrt{2}^{(\kernel{i-1})^2}\cdot (2^k \cdot k^2)^{\kernel{i-1}} & ~~~\mbox{otherwise}.
					\end{array} \right.
		\]
	In particuler, $\kernel{\mxcli{G^\prime}}$ depends only on $k+\PMW$.
	\end{lemma}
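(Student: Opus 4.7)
The plan is a two-level strong induction on $i = \mxcli{\coG{x}}$ over nodes $x$ of the kernelized PMD-tree $\PMD{G^\prime}$. The main conceptual obstacle is that parallel nodes preserve the clique number from parent to child (a parallel quotient graph is edge-less, so $\mxcli{\coG{x}} = \max_j \mxcli{\coG{y_j}}$) whereas $2$-join and prime nodes strictly decrease it, so a single induction on $\mxcli$ would loop at parallel nodes. The fix is to prove two simultaneous bounds: a \emph{non-parallel bound} $|V(\coG{x})| \le \kernel{i}$ for $x$ a leaf, $2$-join, or prime node with $\mxcli{\coG{x}} \le i$, and a matching \emph{parallel bound} $|V(\coG{x})| \le \kernel{i+1}/\PMW$ for $x$ a parallel node with $\mxcli{\coG{x}} \le i$ (absorbing any lower-order polynomial factors). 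The PMD-tree property that no two parallel nodes are adjacent then guarantees alternation of the two types along any root-to-leaf path, so the induction is well-founded. Since a parallel root would make $\coG{r}$ disconnected, the root of $\PMD{G^\prime}$ is non-parallel whenever $|V(G^\prime)| \ge 2$, making the non-parallel bound at the root exactly the lemma; the case $|V(G^\prime)| = 1 = \kernel{1}$ is immediate.

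For the non-parallel step at level $i \ge 2$, let $y_1, \ldots, y_p$ be the children of $x$, with $p \le \PMW$. The quotient graph $\Quo{x}$ is either $K_2$ (the $2$-join case) or a prime on at least four vertices; in the latter case $\Quo{x}$ is connected because any connected component of a graph is automatically a module, and a disconnected prime on at least four vertices would therefore contain a non-trivial module. Thus $\Quo{x}$ is connected with at least one edge, so every child $y_j$ has a ``join-partner'' $y_{j'}$ whose vertex set is completely joined to $V(\coG{y_j})$ in $\coG{x}$; this forces $\mxcli{\coG{y_j}} + \mxcli{\coG{y_{j'}}} \le i$ and hence $\mxcli{\coG{y_j}} \le i-1$. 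Applying the inductive hypothesis (parallel form when $y_j$ is parallel, non-parallel form otherwise), each $|V(\coG{y_j})|$ is bounded by $\kernel{i}/\PMW$, and summing over the at most $\PMW$ children yields $\kernel{i}$.

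For the parallel step at level $i$, every child $y_j$ of $x$ is non-parallel by the PMD-tree property and satisfies $\mxcli{\coG{y_j}} \le i$, so the non-parallel bound gives $|V(\coG{y_j})| \le \kernel{i}$ and the kernelization parameter $m = \max_j |V(\coG{y_j})|$ satisfies $m \le \kernel{i}$. By Lemma~\ref{lem:IDM} combined with the kernelization, all children of $x$ have pairwise distinct $m$-ID-matrices. Such a matrix is determined by a symmetric $\{0,1\}$ adjacency block on at most $m$ vertices (contributing at most $\sqrt{2}^{m^2}$ choices) together with up to $m$ entries in the bottom row, each either $\emptyset$ or an element of $2^\colorset \times \colorset \times \colorset$ contributing at most $2^k \cdot k^2$ options (hence at most $(2^k k^2)^m$ in total). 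Multiplying the number of children by $m$ and bounding $m \le \kernel{i}$ gives the parallel bound $\kernel{i+1}/\PMW = \kernel{i} \cdot \sqrt{2}^{\kernel{i}^2} \cdot (2^k k^2)^{\kernel{i}}$, up to lower-order polynomial factors in $\kernel{i}$ that get absorbed into the exponentials. The main hurdle is precisely this absorption: one must verify that the polynomial slack from summing over possible child sizes $p \le m$ does not damage the stated recurrence, which works because $\kernel{i}$ grows doubly-exponentially while the slack is only polynomial in $\kernel{i}$.
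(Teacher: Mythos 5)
Your proposal follows essentially the same route as the paper's proof: a single induction on the maximum clique size, where connectedness of the non-parallel quotient graph forces $\mxcli{\coG{y}}$ to drop strictly at non-parallel nodes, and the pairwise-distinct $m$-ID-matrices guaranteed by the kernelization (Lemma~\ref{lem:IDM}) bound the fan-out at parallel nodes; your explicit pair of invariants for non-parallel and parallel nodes is exactly the paper's claims (A) and (B). One small remark: the ``polynomial slack from summing over possible child sizes $p\le m$'' that you treat as the main hurdle is not actually present, since $\sum_{p\le m}2^{\binom{p}{2}}(2^kk^2)^p\le \sqrt{2}^{m^2}(2^kk^2)^m$ already holds without any extra factor of $m$, so no absorption argument is needed and the recurrence is met exactly.
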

	\begin{proof}
		We prove the lemma by induction on $\mxcli{G^\prime}$.
		If $\mxcli{G^\prime}=1$, then we have $|V(G^\prime)|=1=\kernel{1}$ since $G^\prime$ is connected.
		
		We thus assume in the remainder of the proof that $\mxcli{G^\prime}>1$.
		Then, the root $r$ of a PMD-tree for $G^\prime$ must be a non-parallel node since $G^\prime$ is connected.
		Because $r$ has at most $\pmw{G^\prime}\le \PMW$ children, it suffices to show that the corresponding graph of each child of $r$ has at most
		\begin{eqnarray*}
			\kernel{\mxcli{G^\prime}-1} \cdot \sqrt{2}^{(\kernel{\mxcli{G^\prime}-1})^2}\cdot (2^k \cdot k^2)^{\kernel{\mxcli{G^\prime}-1}}
		\end{eqnarray*}
		vertices.
		We will prove this by showing the following two claims for any child $x$ of $r$:
		\begin{enumerate}[(A)]
			\item $|V(H)|\le \kernel{\mxcli{G^\prime}-1}$ holds for any connected component $H$ of $\coG{x}$; and
			\item $\coG{x}$ has at most $\sqrt{2}^{(\kernel{\mxcli{G^\prime}-1})^2}\cdot (2^k \cdot k^2)^{\kernel{\mxcli{G^\prime}-1}}$ connected components.
		\end{enumerate}
		
		In order to prove the claim (A), we first claim that $\mxcli{\coG{x}}<\mxcli{G^\prime}$ holds.
		Assume for a contradiction that $\coG{x}$ contains a clique $X$ of size $\mxcli{G^\prime}$.
		Let $\hat{x}$ be a node of a quotient graph $\Quo{r}$ which corresponds to $x$.
		By the definition, $\Quo{r}$ is connected, and hence there exists a node $\hat{y}\in V(\Quo{r})$ which is adjacent to $\hat{x}$.
		Let $y\in \PMD{G}$ be the child of $r$ corresponding to $\hat{y}$.
		Recall that all vertices in $X$ are connected with at least one vertex $v$ in $V(\coG{y})$ by the substitution operation, which means that $G^\prime$ has a clique $X\cup \hset{v}$ of size $\mxcli{G^\prime}+1$.
		This contradicts the assumption that the maximum clique size of $G^\prime$ is $\mxcli{G^\prime}$; this completes the proof of the claim.
		Note that $\mxcli{H}\le \mxcli{\coG{x}}<\mxcli{G^\prime}$ holds for any connected component $H$ of $\coG{x}$.
		Therefore, $|V(H)|\le \kernel{\mxcli{G^\prime}-1}$ follows from the induction hypothesis.
		
		We next prove the claim (B).
		If $x$ is a non-parallel node, $\coG{x}$ is connected and hence we are done.
		The remaining case is where $x$ is a parallel node.
		Let $H$ be a connected component of $\coG{x}$, and let $Y$ be the set of all children of $x$.
		Then, there exists exactly one child $y\in Y$ such that $V(\coG{y})\supseteq V(H)$.
		Since a PMD-tree has no edge joining two parallel nodes, $y$ is not a parallel node.
		Thus, $\coG{y}$ is connected, and hence we indeed have $V(\coG{y})=V(H)$.
		Therefore, it suffices to bound the size of $Y$ instead of the number of connected components in $\coG{x}$.
		Let $m:=\max_{y\in Y}|V(\coG{y})|$.	
		Since $G^\prime$ is already kernelized, $\IDM{y_1}\ne \IDM{y_2}$ holds for any two children $y_1,y_2\in Y$.
		Therefore, $|Y|$ cannot exceed the number of distinct $m$-ID-matrices.
		Recall that the upper $m\times m$ submatrix consists of $m^2$ values from $\hset{0,1}$, its $(i,i)$-element is $0$ for each $i\in \hset{1,2,\ldots,m}$, and it is symmetric.
		Therefore, the number of such $m\times m$ submatrices can be bounded by $2^{m^2/2}=\sqrt{2}^{m^2}$.
		Recall that all elements of the $(m+1)$-st row are chosen from the set $2^C\times C\times C$, where $C$ is the color set of size at most $k$.
		Therefore, the number of such $1 \times m$ submatrices can be bounded by $(2^k \cdot k^2)^m$.
		By the claim (A), we have $m=\max_{y\in Y}|V(\coG{y})|\le \kernel{\mxcli{G^\prime}-1}$.
		Therefore, the size of $Y$, and hence the number of connected components in $\coG{x}$, can be bounded by 
		\begin{eqnarray*}
			\sqrt{2}^{(\kernel{\mxcli{G^\prime}-1})^2}\cdot (2^k \cdot k^2)^{\kernel{\mxcli{G^\prime}-1}}.
		\end{eqnarray*}
		
		From the claims (A) and (B), we have the following inequality.
		\begin{eqnarray*}
			|V(G^\prime)|\le w \times \kernel{\mxcli{G^\prime}-1} \times \sqrt{2}^{(\kernel{\mxcli{G^\prime}-1})^2}\cdot (2^k \cdot k^2)^{\kernel{\mxcli{G^\prime}-1}}
		\end{eqnarray*}
		as claimed.
		In particular, we can conclude that $\kernel{\mxcli{G^\prime}}$ depends only on $k+\PMW$, because $\mxcli{G^\prime}\le k$.
		\qed
	\end{proof}
	
	Finally, we prove Theorem~\ref{the:bmw}.
	By the above discussions, we can compute the kernelized instance $\Inst^\prime=\rest{\Inst}{G^\prime}$ of \lcr~in polynomial time.
	Because the size of $\Inst^\prime$ depends only on $k+\PMW$, we can solve $\Inst^\prime$ by enumerating all $\rest{\verlist}{G^\prime}$-colorings.
	The running time for this enumeration depends only on $k+\PMW$, and hence we obtain a fixed-parameter algorithm for \lcr.
	
	This completes the proof of Theorem~\ref{the:bmw}.

%
%
	\section{Shortest Variant}
\label{sec:shortest}

	In this section, we study the shortest variant, \lcsr.
	We note that the shortest length can be expressed by a polynomial number of bits, because there are at most $k^n$ colorings for a graph with $n$ vertices and $k$ colors.
	Therefore, the answer can be output in polynomial time.
	The following is our result.
	\begin{theorem}
	\label{the:split}
	\textsc{List coloring shortest reconfiguration} is fixed-parameter tractable when parameterized by $k+\VC$, where $k$ and $\VC$ are the upper bounds on the sizes of the color set and a minimum vertex cover of an input graph, respectively.
	\end{theorem}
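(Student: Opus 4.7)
The plan is to partition $V(G)$ into a minimum vertex cover $\Vcov$ of size at most $\VC$ and its complementary independent set $\Vind$. Since $\Vind$ is independent, once the restriction $c$ of a coloring to $\Vcov$ is fixed, the allowed colors of every $v\in\Vind$ are determined by $A_c(v):=\verlist(v)\setminus c(\Neigh{G}{v}\cap\Vcov)$, and these allowances are independent across $\Vind$. I will regard any reconfiguration sequence as partitioned into \emph{phases} $P_0,P_1,\ldots,P_q$ during each of which the restriction to $\Vcov$ equals some fixed $c_i$, so that only $\Vind$-vertices change within a phase and exactly one $\Vcov$-vertex changes between consecutive phases.

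The key structural lemma I will establish is a \emph{shortcut claim}: some shortest reconfiguration sequence uses each $\Vcov$-coloring at most once, i.e., $c_0,c_1,\ldots,c_q$ is simple. Suppose toward a contradiction that $c_i=c_j$ for some $i<j$, and let $\pi_k$ denote the $\Vind$-restriction of the coloring at the end of phase $P_k$. I will excise the phases $P_{i+1},\ldots,P_j$ together with their $j-i$ bordering $\Vcov$-transitions and, within the now extended phase $P_i$, directly recolor each $\Vind$-vertex $v$ with $\pi_i(v)\ne\pi_j(v)$ from $\pi_i(v)$ to $\pi_j(v)$. These direct recolorings are feasible because both colors lie in $A_{c_i}(v)$ and $\Vind$-vertices are pairwise non-adjacent, and the subsequent $\Vcov$-transition $c_i\to c_{j+1}$ proceeds exactly as the original $c_j\to c_{j+1}$ did with $\Vind$-coloring $\pi_j$. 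A triangle-inequality count vertex by vertex shows that the excised segment performed at least $|\{v:\pi_i(v)\ne\pi_j(v)\}|$ recolorings of $\Vind$, so the total number of $\Vind$-recolorings does not grow while $q$ drops by $j-i\ge 1$, contradicting minimality.

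Given the lemma, $q+1\le k^{\VC}$. I will enumerate all simple paths $\sigma=(c_0,c_1,\ldots,c_q)$ from the $\Vcov$-restriction of $\mapf_\ini$ to that of $\mapf_\tar$ in the graph on $\Vcov$-colorings whose edges join colorings differing in one vertex; there are at most $(k^{\VC})!$ such paths. For each $\sigma$, the total cost is $q+\sum_{v\in\Vind}\mathrm{cost}(v,\sigma)$, where $\mathrm{cost}(v,\sigma)$ is the length of a shortest source-to-sink path in the layered DAG whose layer $i$ consists of the colors in $A_{c_i}(v)\cap A_{c_{i+1}}(v)$ (with boundary nodes $\mapf_\ini(v)$ before layer $0$ and $\mapf_\tar(v)$ after layer $q$), and whose transitions $(i,d)\to(i+1,d')$ carry cost $[d\ne d']$. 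By grouping $\Vind$-vertices into \emph{types} $(\verlist(v),\Neigh{G}{v}\cap\Vcov,\mapf_\ini(v),\mapf_\tar(v))$, of which there are at most $2^k\cdot 2^{\VC}\cdot k^2$, the shortest-path computation is performed once per type and then scaled by the type's cardinality, yielding total running time $f(k,\VC)\cdot\mathrm{poly}(n)$.

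The main obstacle will be the shortcut claim. The subtlety is that the excised segment may have recolored the same $\Vind$-vertex many times, perhaps traversing colors that are unavailable under $c_i$; one must argue carefully that its net effect on $\Vind$-colors (the map $\pi_i\mapsto\pi_j$) can always be realized by a single direct recoloring per changed vertex, and that removing it does not compromise any subsequent $\Vcov$-transition. Everything else reduces to routine bookkeeping and standard shortest-path computations on polynomially sized layered graphs.
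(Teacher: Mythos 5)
Your proposal is correct, and it takes a genuinely different route from the paper. The paper's proof proceeds by \emph{kernelization}: it introduces a vertex-weighted variant of the shortest-reconfiguration problem, proves (Lemma~\ref{lem:wreduce}) that merging two identical independent-set vertices $v_1,v_2$ (same neighborhood in $\Vcov$, same list, same initial and target colors) into one vertex of weight $\weight(v_1)+\weight(v_2)$ preserves $\OPT{\Inst}{\weight}$, and then repeatedly applies this reduction so that the surviving independent set has size at most $2^\VC\cdot 2^k\cdot k^2$ (Lemma~\ref{lem:kernel_sp}); the resulting instance is solved by a shortest-path computation on the full reconfiguration graph of the kernel. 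You instead work directly with the solution structure: you establish a \emph{shortcut lemma} showing that some optimal sequence visits each $\Vcov$-coloring at most once, then enumerate the at most $(k^\VC)!$ candidate skeletons $\sigma=(c_0,\ldots,c_q)$, and for each one compute the $\Vind$-side cost exactly via per-vertex shortest paths in a layered DAG whose $i$-th layer is $A_{c_i}(v)\cap A_{c_{i+1}}(v)$. Your shortcut argument is sound (the excised segment must have performed at least $|\{v:\pi_i(v)\ne\pi_j(v)\}|$ independent-set recolorings, and since $\Vind$ is independent, both $\pi_i(v)$ and $\pi_j(v)$ lie in $A_{c_i}(v)$, so direct replacement is feasible and strictly shortens the sequence by $j-i\ge 1$), and your DAG formulation correctly captures all feasibility constraints at the transition moments. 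One thing your approach highlights that the paper's kernelization obscures is the clean decomposition of the cost into $q$ (number of $\Vcov$-recolorings) plus independent per-vertex contributions; this gives an arguably more transparent certificate of optimality, at the price of needing the shortcut lemma, which the paper avoids entirely. Conversely, the paper's weighted reduction rule is reusable machinery (it is the exact counterpart of Lemma~\ref{lem:reduce} used for the modular-width result), whereas your skeleton enumeration is specific to the vertex-cover parameterization. Both yield FPT running times of the form $f(k,\VC)\cdot\mathrm{poly}(n)$; your type-grouping step is a nice optimization but not needed for FPT, since per-vertex shortest paths already cost only $\mathrm{poly}(k^\VC,k)$ each.
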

	
	As a corollary, we have the following result.
	\begin{corollary}
		\label{cor:split}
	\textsc{List coloring shortest reconfiguration} is fixed-parameter tractable for split graphs when parameterized by the size $k$ of the color set. 
	\end{corollary}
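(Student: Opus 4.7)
The strategy is to reduce directly to Theorem~\ref{the:split} by observing that, on split graphs, the vertex cover number is automatically bounded by $k$ whenever the instance is not a trivial no-instance. The plan is therefore to exhibit a small vertex cover built from the split partition and then feed the resulting bound on $\VC$ into the FPT algorithm of Theorem~\ref{the:split}.

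Given an instance $\Inst = (G,\verlist,\mapf_\ini,\mapf_\tar)$ of \lcsr\ in which $G$ is a split graph, first compute in polynomial time a split partition $(K,I)$ of $V(G)$ into a clique $K$ and an independent set $I$; such a partition exists by definition and can be found in linear time. Since $I$ is independent, no edge of $G$ has both endpoints in $I$, so every edge has at least one endpoint in $K$. Hence $K$ is a vertex cover of $G$, which gives $\VC \le |K|$.

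Now invoke Observation~\ref{obs:colors}: because $\mapf_\ini$ is an $\verlist$-coloring of $G$ by hypothesis, we have $\mxcli{G} \le |\colorset| = k$. Since $K$ is itself a clique in $G$, this forces $|K| \le \mxcli{G} \le k$, and combining with the previous inequality yields $\VC \le k$. Therefore the combined parameter in Theorem~\ref{the:split} satisfies $k + \VC \le 2k$, and applying that theorem solves $\Inst$ in time $f(k) \cdot |V(G)|^{O(1)}$ for some computable function $f$, which is a fixed-parameter algorithm in $k$ alone.

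There is essentially no technical obstacle: the corollary follows as soon as one notices that the clique side of a split partition is simultaneously a vertex cover (bounded by $\VC$) and a clique (bounded by $k$ via Observation~\ref{obs:colors}). The only minor bookkeeping is to treat the degenerate case in which $\mapf_\ini$ or $\mapf_\tar$ fails to be a valid $\verlist$-coloring, which can be checked in polynomial time and dismissed as a trivial no-instance before the above bound is invoked.
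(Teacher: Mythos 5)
Your proof is correct and follows essentially the same route as the paper: use the clique side of the split partition as a vertex cover, bound its size by $\mxcli{G} \le k$ via Observation~\ref{obs:colors}, conclude $\VC \le k$, and invoke Theorem~\ref{the:split}. The extra remark about degenerate instances where $\mapf_\ini$ or $\mapf_\tar$ is not a valid $\verlist$-coloring is a harmless addition the paper leaves implicit.
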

	\begin{proof}
		Let $\Inst=(G,\verlist,\mapf_{\ini},\mapf_{\tar})$ be an instance of \lcsr ~such that $G$ is a split graph.
		Assume that the vertex set of $G$ can be partitioned into a clique $V^\prime$ and an independent set $V^{\prime \prime}$.
		By Observation~\ref{obs:colors}, we have $|V^\prime|\le \mxcli{G}\le k$.
		Observe that $V^\prime$ forms a vertex cover of $G$.
		Thus, $\VC \le |V^\prime|\le k$ holds for split graphs.
		\qed
	\end{proof}

	As a proof of Theorem~\ref{the:split}, we give such a fixed-parameter algorithm.
	Our basic idea is the same as the fixed-parameter algorithm in Section~\ref{sec:FPT}.
	However, in order to compute the shortest length, we consider a more general ``weighted'' version of \lcsr, which is defined as follows.
	Let $\Inst=(G,\verlist,\mapf_{\ini},\mapf_{\tar})$ be an instance of \lcr, and assume that each vertex $v\in V(G)$ has a \emph{weight} $\weight(v)\in \NN$, where $\NN$ is the set of all positive integers.
	For two adjacent $\verlist$-colorings $\mapf$ and $\mapf^\prime$ of a graph $G$, we define the \emph{gap} $\gap{\weight}{\mapf}{\mapf^\prime}$ \emph{between} $\mapf$ \emph{and} $\mapf^\prime$ as the weight $\weight(v)$ of $v$, where $v$ is a unique vertex in $\diff{\mapf}{\mapf^\prime}$.
	The \emph{length} $\len{\weight}{\Seq}$ \emph{of} a reconfiguration sequence $\Seq=\hseq{\mapf_0, \mapf_1,\allowbreak \ldots, \mapf_\ell}$ is defined as $\len{\weight}{\Seq}=\sum_{i=1}^{\ell} \gap{\weight}{\mapf_{i-1}}{\mapf_i}$.
	We denote by $\OPT{\Inst}{\weight}$ the length of a shortest reconfiguration sequence between $\mapf_\ini$ and $\mapf_\tar$; we define $\OPT{\Inst}{\weight}=+\infty$ if $\Inst$ is a no-instance of \lcr.
	Then, \lcsr ~can be seen as computing $\OPT{\Inst}{\weight}$ for the case where every vertex has weight one.
	Thus, to prove Theorem~\ref{the:split}, it suffices to construct a fixed-parameter algorithm for the weighted version when parameterized by $k+\VC$.
	
	As with Section~\ref{sec:FPT}, we again use the concept of kernelization to prove Theorem~\ref{the:split}.
	More precisely, for a given instance $(\Inst,\weight)$, we first construct an instance $(\Inst^\prime=(G^\prime,\verlist^\prime,\mapf_{\ini}^\prime,\mapf_{\tar}^\prime),\weight^\prime)$ in polynomial time such that the size of $\Inst^\prime$ depends only on $k+\VC$, and $\OPT{\Inst}{\weight}=\OPT{\Inst^\prime}{\weight^\prime}$ holds.
	Then, we can compute $\OPT{\Inst^\prime}{\weight^\prime}$ by computing a (weighted) shortest path between $\mapf_{\ini}^\prime$ and $\mapf_{\tar}^\prime$ in an edge-weighted graph defined as follows:
	the vertex set consists of all $\verlist^\prime$-colorings of $G^\prime$, and each pair of adjacent $\verlist^\prime$-colorings are connected by an edge with a weight corresponding to the gap between them.
	
\subsection{Reduction rule for the weighted version}

\begin{figure}[t]
	\begin{center}
		\includegraphics[width=0.8\textwidth,clip]{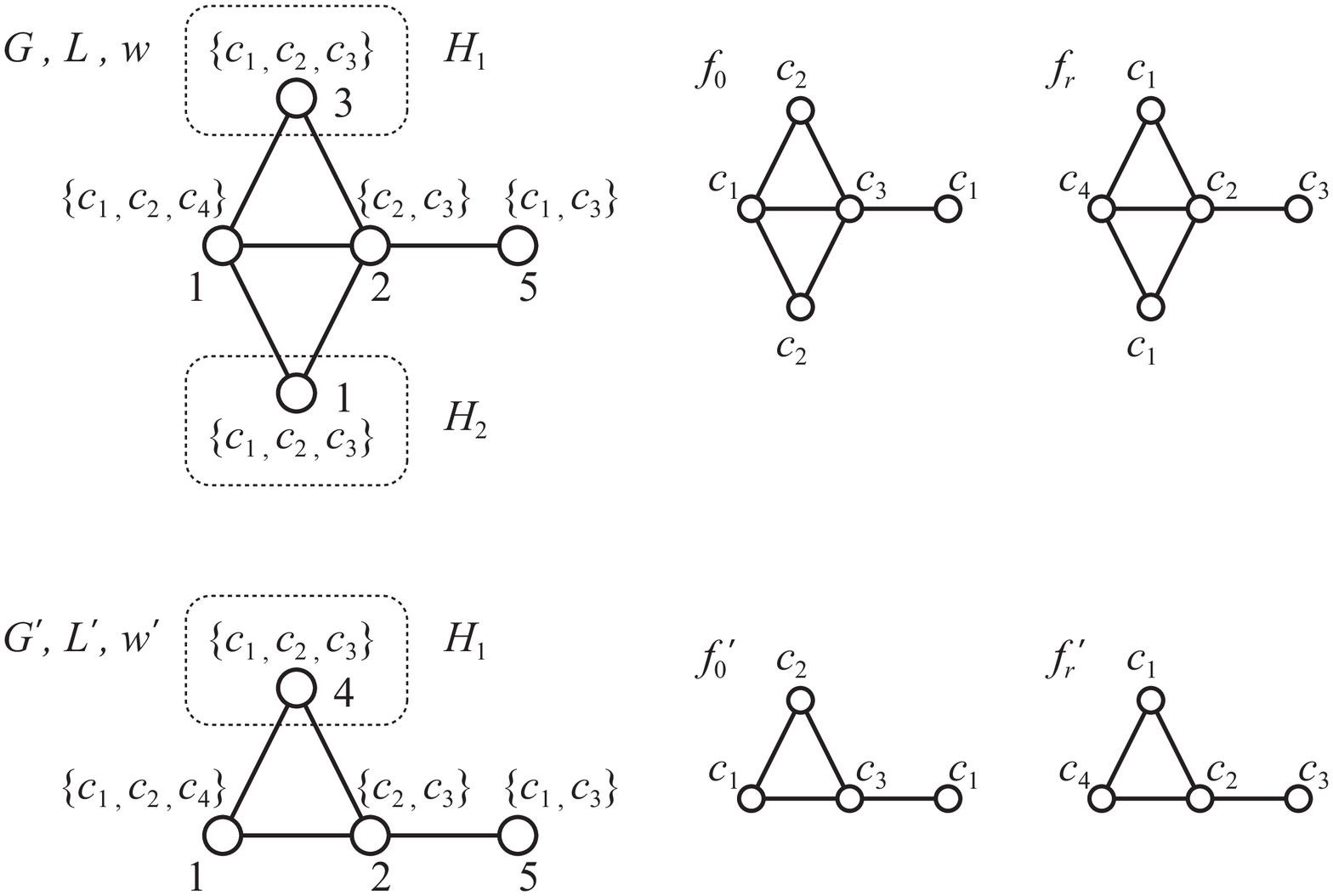}
	\end{center}
	\caption{Two identical subgraphs $H_1$ and $H_2$ for an instance $(\Inst=(G,\verlist,\mapf_{\ini},\mapf_{\tar}),\weight)$, and a new instance $(\Inst^\prime,\weight^\prime)$.}
	\label{fig:widentical}
\end{figure}

	In this subsection, we give the counterpart of Lemma~\ref{lem:reduce} for the weighted version.
	
	We first introduce some notation.
	Let $\Seq=\hseq{\mapf_0, \mapf_1, \ldots, \mapf_\ell}$ be a reconfiguration sequence for an instance $\Inst=(G,\verlist,\mapf_{\ini},\mapf_{\tar})$ of \lcr.
	For each vertex $v\in V(G)$, we denote by $\nrec{\Seq}{v}$ the number of indices $i$ such that $\diff{\mapf_{i-1}}{\mapf_i}=\hset{v}$.
	In other words, $\nrec{\Seq}{v}$ is the number of steps recoloring $v$ in $\Seq$.
	Notice that $\len{\weight}{\Seq}=\sum_{v\in V(G)}\weight(v)\cdot \nrec{\Seq}{v}$ holds for any weight function $w\colon V(G)\to \NN$.
	
	\medskip	
	Let $(\Inst=(G,\verlist,\mapf_{\ini},\mapf_{\tar}),\weight)$ be an instance of the weighted version, and assume that there exist two identical subgraphs $H_1$ and $H_2$ of $G$, both of which consist of single vertices, say, $V(H_1)=\hset{v_1}$ and $V(H_2)=\hset{v_2}$.
	We now define a new instance $(\Inst^\prime,\weight^\prime)$ as follows
	(see also \figurename~\ref{fig:widentical}):
	\begin{itemize}
		\item $\Inst^\prime=\rest{\Inst}{G\setminus H_2}$; and
		\item $\weight^\prime(v_1)=\weight(v_1)+\weight(v_2)$ and $\weight^\prime(v)=\weight(v)$ for any $v\in V(G)\setminus \hset{v_1,v_2}$.
	\end{itemize}
	Intuitively, $v_2$ is merged into $v_1$ together with its weight.
	Then, we have the following lemma.
	\begin{lemma}
	\label{lem:wreduce}
	$\OPT{\Inst}{\weight}=\OPT{\Inst^\prime}{\weight^\prime}$.
	\end{lemma}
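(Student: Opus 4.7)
The plan is to prove the two inequalities $\OPT{\Inst}{\weight}\le \OPT{\Inst^\prime}{\weight^\prime}$ and $\OPT{\Inst^\prime}{\weight^\prime}\le \OPT{\Inst}{\weight}$ separately, relying on the identity $\len{\weight}{\Seq}=\sum_{v\in V(G)}\weight(v)\cdot \nrec{\Seq}{v}$ stated immediately before the lemma, so that in both directions it suffices to track per-vertex recoloring counts.

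For $\OPT{\Inst}{\weight}\le \OPT{\Inst^\prime}{\weight^\prime}$, I would start from an optimal reconfiguration sequence $\seq{S}^\prime=\hseq{\mapg_0,\ldots,\mapg_\ell}$ for $\Inst^\prime$ and apply verbatim the extension construction used in the only-if direction of Lemma~\ref{lem:reduce}: set $\widehat{\mapf}_i(v_2):=\mapg_i(v_1)$ and $\widehat{\mapf}_i(v):=\mapg_i(v)$ for every $v\ne v_2$, and insert an intermediate step $\widetilde{\mapf}_i$ whenever $\widehat{\mapf}_{i-1}$ and $\widehat{\mapf}_i$ differ in both $v_1$ and $v_2$. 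The key observation, which can be read off directly from that construction, is that each recoloring of $v_1$ in $\seq{S}^\prime$ contributes exactly one recoloring of $v_1$ and exactly one recoloring of $v_2$ in the resulting sequence $\seq{S}$ for $\Inst$, while all other recolorings are preserved. Hence $\nrec{\seq{S}}{v_1}=\nrec{\seq{S}}{v_2}=\nrec{\seq{S}^\prime}{v_1}$ and $\nrec{\seq{S}}{v}=\nrec{\seq{S}^\prime}{v}$ for every $v\notin \hset{v_1,v_2}$, so the definition $\weight^\prime(v_1)=\weight(v_1)+\weight(v_2)$ immediately yields $\len{\weight}{\seq{S}}=\len{\weight^\prime}{\seq{S}^\prime}=\OPT{\Inst^\prime}{\weight^\prime}$.

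The reverse inequality is more delicate, because a general reconfiguration sequence $\seq{S}$ for $\Inst$ may recolor $v_1$ and $v_2$ asymmetrically; this asymmetry is the main obstacle. My tool for handling it is a \emph{swap} operation: since $v_1$ and $v_2$ are identical singletons with the same external neighborhood and no edge between them, the sequence $\tilde{\seq{S}}$ obtained from $\seq{S}$ by exchanging the colors of $v_1$ and $v_2$ at every step is again a valid reconfiguration sequence with the same endpoints. I would verify properness, endpoint preservation, and single-vertex adjacency one by one, using condition~2(a) (same external neighborhood) to preserve properness at neighbors of $v_1$ or $v_2$, the absence of the edge $v_1 v_2$ to handle the two vertices themselves, and condition~2(b) (i.e.\ $\asgn{v_1}=\asgn{v_2}$) to preserve the initial and target colorings. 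The swap exchanges the counts, $\nrec{\tilde{\seq{S}}}{v_1}=\nrec{\seq{S}}{v_2}$ and $\nrec{\tilde{\seq{S}}}{v_2}=\nrec{\seq{S}}{v_1}$, leaving all other counts unchanged.

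Given an optimal $\seq{S}$ for $\Inst$, I would apply the swap if $\nrec{\seq{S}}{v_1}>\nrec{\seq{S}}{v_2}$ and leave $\seq{S}$ alone otherwise, obtaining a sequence $\seq{S}^\ast$ with $\nrec{\seq{S}^\ast}{v_1}=\min\{\nrec{\seq{S}}{v_1},\nrec{\seq{S}}{v_2}\}$ and $\nrec{\seq{S}^\ast}{v}=\nrec{\seq{S}}{v}$ for every $v\notin \hset{v_1,v_2}$. Restricting $\seq{S}^\ast$ to $G\setminus H_2$ exactly as in the if direction of Lemma~\ref{lem:reduce} yields a reconfiguration sequence $\seq{S}^\prime$ for $\Inst^\prime$, and a direct calculation gives
\[
\len{\weight^\prime}{\seq{S}^\prime}=\sum_{v\notin \hset{v_1,v_2}}\weight(v)\,\nrec{\seq{S}}{v}+\bigl(\weight(v_1)+\weight(v_2)\bigr)\min\{\nrec{\seq{S}}{v_1},\nrec{\seq{S}}{v_2}\}\le \len{\weight}{\seq{S}},
\]
where the final inequality is the elementary bound $(\weight(v_1)+\weight(v_2))\min\{n_1,n_2\}\le \weight(v_1)n_1+\weight(v_2)n_2$. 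This yields $\OPT{\Inst^\prime}{\weight^\prime}\le \OPT{\Inst}{\weight}$, and combining both directions establishes the lemma. The swap operation together with the $\min$-based bound is the only substantive step beyond Lemma~\ref{lem:reduce}.
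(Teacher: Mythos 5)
Your proof is correct and takes essentially the same route as the paper: the forward inequality lifts an optimal sequence for $\Inst^\prime$ via the construction in the only-if direction of Lemma~\ref{lem:reduce}, and the reverse inequality restricts an optimal sequence for $\Inst$ after arranging that $v_1$ is recolored no more often than $v_2$, yielding the same $\min$-based bound. Your explicit ``swap'' operation on sequences over $G$ is a cleaner, self-contained packaging of what the paper accomplishes by a two-case split (restricting to $G\setminus H_2$ or to $G\setminus H_1$ and then ``rephrasing''), but the underlying argument is the same.
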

	\begin{proof}
		For the notational convenience, we denote $G^\prime:=G\setminus H_2$.
		By Lemma~\ref{lem:reduce}, $\OPT{\Inst}{\weight}=+\infty$ if and only if $\OPT{\Inst^\prime}{\weight^\prime}=+\infty$.
		Therefore, we assume that $\OPT{\Inst^\prime}{\weight^\prime}\neq +\infty$ and $\OPT{\Inst}{\weight}\neq +\infty$.
		
		We first show that $\OPT{\Inst}{\weight}\le \OPT{\Inst^\prime}{\weight^\prime}$.
		Since $\OPT{\Inst}{\weight}\le \len{\weight}{\Seq}$ holds for any reconfiguration sequence $\Seq$ for $\Inst$, it suffices to show that there exists a reconfiguration sequence for $\Inst$ whose length is at most $\OPT{\Inst^\prime}{\weight^\prime}$.
		Let $\Seq^\prime$ be a shortest reconfiguration sequence for $\Inst^\prime$ such that $\len{\weight^\prime}{\Seq^\prime}=\OPT{\Inst^\prime}{\weight^\prime}$.
		Following the only-if direction proof of Lemma~\ref{lem:reduce}, we can construct a reconfiguration sequence $\Seq$ for $\Inst$ such that $\nrec{\Seq}{v_1}=\nrec{\Seq}{v_2}=\nrec{\Seq^\prime}{v_1}$ and $\nrec{\Seq}{v}=\nrec{\Seq^\prime}{v}$ for any $v\in V(G)\setminus \hset{v_1,v_2}$.
		Therefore, 
		\[
		\begin{array}{lll}
		\len{\weight}{\Seq} &=& \sum_{v\in V(G)}\weight(v)\cdot \nrec{\Seq}{v} \\
		&=& \weight(v_1) \cdot \nrec{\Seq}{v_1} + \weight(v_2) \cdot \nrec{\Seq}{v_2} + \sum_{v\in V(G)\setminus \hset{v_1,v_2}}\weight(v)\cdot \nrec{\Seq}{v} \\
		&=& (\weight(v_1) + \weight(v_2))\cdot \nrec{\Seq}{v_1} + \sum_{v\in V(G)\setminus \hset{v_1,v_2}}\weight(v)\cdot \nrec{\Seq}{v} \\
		&=& \weight^\prime(v_1) \cdot \nrec{\Seq^\prime}{v_1} + \sum_{v\in V(G)\setminus \hset{v_1,v_2}}\weight^\prime(v)\cdot \nrec{\Seq^\prime}{v} \\
		&=& \sum_{v\in V(G^\prime)}\weight^\prime(v)\cdot \nrec{\Seq^\prime}{v} \\
		&=& \len{\weight^\prime}{\Seq^\prime} \\
		&=& \OPT{\Inst^\prime}{\weight^\prime}.
		\end{array}
		\]
		Thus, $\Seq$ is a desired reconfiguration sequence for $\Inst$.
		
		We next show that $\OPT{\Inst^\prime}{\weight^\prime}\le \OPT{\Inst}{\weight}$.
		Since $\OPT{\Inst^\prime}{\weight^\prime}\le \len{\weight^\prime}{\Seq^\prime}$ holds for any reconfiguration sequence $\Seq^\prime$ for $\Inst^\prime$, it suffices to show that there exists a reconfiguration sequence for $\Inst^\prime$ whose length is at most $\OPT{\Inst}{\weight}$.
		Let $\Seq$ be a shortest reconfiguration sequence for $\Inst$ such that $\len{\weight}{\Seq}=\OPT{\Inst}{\weight}$.
		We now construct a reconfiguration sequence for $\Inst^\prime$ from $\Seq$ such that $\len{\weight^\prime}{\Seq^\prime}\le \OPT{\Inst}{\weight}$ as follows.
		\begin{description}
			\item[Case 1. $\nrec{\Seq}{v_1}\le \nrec{\Seq}{v_2}$:]
			In this case, we restrict all $\verlist$-colorings in $\Seq$ on $V(G^\prime)$ to obtain a reconfiguration sequence $\Seq_1$ for $\rest{\Inst}{G^\prime}=\Inst^\prime$; recall the if direction proof of Lemma~\ref{lem:reduce}.
			From the construction, $\nrec{\Seq_1}{v_1}=\nrec{\Seq}{v_1}\le \nrec{\Seq}{v_2}$ and $\nrec{\Seq_1}{v}=\nrec{\Seq}{v}$ holds for any vertex $v\in V(G^\prime)=V(G) \setminus \hset{v_2}$.
			Therefore, we have
			\[
			\begin{array}{lll}
			\len{\weight^\prime}{\Seq_1} &=& \sum_{v\in V(G^\prime)}\weight^\prime(v)\cdot \nrec{\Seq_1}{v} \\
			&=& \weight^\prime(v_1) \cdot \nrec{\Seq_1}{v_1} + \sum_{v\in V(G^\prime)\setminus \hset{v_1}}\weight^\prime(v)\cdot \nrec{\Seq_1}{v} \\
			&=& (\weight(v_1)+\weight(v_2)) \cdot \nrec{\Seq}{v_1} + \sum_{v\in V(G)\setminus \hset{v_1,v_2}}\weight(v)\cdot \nrec{\Seq}{v} \\
			&\le & \weight(v_1) \cdot \nrec{\Seq}{v_1} + \weight(v_2) \cdot \nrec{\Seq}{v_2} + \sum_{v\in V(G)\setminus \hset{v_1,v_2}}\weight(v)\cdot \nrec{\Seq}{v} \\
			&=& \sum_{v\in V(G)}\weight(v)\cdot \nrec{\Seq}{v} \\
			&=& \len{\weight}{\Seq} \\
			&=& \OPT{\Inst}{\weight}.
			\end{array}
			\]
			Thus, $\Seq_1$ is a desired reconfiguration sequence for $\Inst^\prime$.\\
			\item[Case 2. $\nrec{\Seq}{v_1} > \nrec{\Seq}{v_2}$:]
			In this case, instead of restricting $\verlist$-colorings in $\Seq$ on $V(G^\prime)$, we restrict them on $V(G\setminus H_1)$ and obtain a reconfiguration sequence $\Seq_2$ for $\rest{\Inst}{G\setminus H_1}$.
			Then, because $H_1$ and $H_2$ are identical, we can easily ``rephrase'' $\Seq_2$ as a reconfiguration sequence $\Seq_2^\prime$ for $\Inst^\prime$.
			By the same arguments as the case 1 above, we have $\len{\weight^\prime}{\Seq_2^\prime} < \len{\weight}{\Seq}= \OPT{\Inst}{\weight}$.
			Thus, $\Seq_2^\prime$ is a desired reconfiguration sequence for $\Inst^\prime$.
		\end{description}
		
		In this way, we have shown that $\OPT{\Inst}{\weight}=\OPT{\Inst^\prime}{\weight^\prime}$ as claimed.
		\qed
	\end{proof}


	
\subsection{Kernelazation}

	Finally, we give a kernelization algorithm as follows.
	
	Let $(\Inst=(G,\verlist,\mapf_{\ini},\mapf_{\tar}),\weight)$ be an instance of the weighted version such that $G$ has a vertex cover of size at most $\VC$.
	Because such a vertex cover can be computed in time $O(2^\VC \cdot |V(G)|)$~\cite{DF99}, we now assume that we are given a vertex cover $\Vcov$ of size at most $\VC$.
	Notice that $\Vind:=V\setminus \Vcov$ forms an independent set of $G$.
	Suppose that there exist two vertices $v_1,v_2\in \Vind$ such that $N(G,v_1)=N(G,v_2)$ and $\asgn{v_1}=\asgn{v_2}$ hold.
	Then, induced subgraphs $G[\hset{v_1}]$ and $G[\hset{v_2}]$ are identical.
	Therefore, we can apply Lemma~\ref{lem:wreduce} to remove $v_2$ from $G$, and modify a weight function without changing the optimality.
	As a kernelization, we repeatedly apply Lemma~\ref{lem:wreduce} for all such pairs of vertices in $\Vind$, which can be done in polynomial time.
	Let $G^\prime$ be the resulting subgraph of $G$, and let $\Vind^\prime:=V(G^\prime)\setminus \Vcov$.
	Since $\Vcov$ is of size at most $\VC$, it suffices to prove the following lemma.
	\begin{lemma}
	\label{lem:kernel_sp}
	$|\Vind^\prime|\le 2^\VC \cdot 2^k\cdot k^2$.
	\end{lemma}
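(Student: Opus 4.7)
The plan is a straightforward counting argument, exploiting that after the kernelization no two vertices in $\Vind^\prime$ share both the same neighborhood and the same vertex assignment.

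First I would observe that, since $\Vcov$ is a vertex cover, the set $\Vind^\prime = V(G^\prime)\setminus \Vcov$ is an independent set in $G^\prime$, so for every $v\in \Vind^\prime$ we have $N(G^\prime,v)\subseteq \Vcov$. In particular, the neighborhood $N(G^\prime,v)$ is one of at most $2^{|\Vcov|}\le 2^\VC$ possible subsets of $\Vcov$.

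Next I would count the possible vertex assignments. For each $v\in \Vind^\prime$, the assignment $\asgn{v} = (\verlist(v), \mapf_\ini(v), \mapf_\tar(v))$ consists of a list $\verlist(v)\subseteq C$ (at most $2^k$ choices since $|C|\le k$), an initial color $\mapf_\ini(v)\in \verlist(v)\subseteq C$ (at most $k$ choices), and a target color $\mapf_\tar(v)\in \verlist(v)\subseteq C$ (at most $k$ choices). Hence there are at most $2^k\cdot k^2$ distinct vertex assignments.

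Combining the two, the number of distinct pairs $(N(G^\prime,v),\asgn{v})$ for $v\in \Vind^\prime$ is at most $2^\VC\cdot 2^k\cdot k^2$. I would then conclude by invoking the termination condition of the kernelization: if two vertices $v_1,v_2\in \Vind^\prime$ had $N(G^\prime,v_1)=N(G^\prime,v_2)$ and $\asgn{v_1}=\asgn{v_2}$, then $G^\prime[\{v_1\}]$ and $G^\prime[\{v_2\}]$ would be identical subgraphs (the isomorphism being the obvious bijection, and condition~2(a) holding because both vertices have the same neighbors outside $\{v_1,v_2\}$, which is exactly their common neighborhood since $\Vind^\prime$ is independent). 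Lemma~\ref{lem:wreduce} would then still be applicable, contradicting maximality of the kernelization. Therefore the map $v\mapsto (N(G^\prime,v),\asgn{v})$ is injective on $\Vind^\prime$, yielding $|\Vind^\prime|\le 2^\VC\cdot 2^k\cdot k^2$.

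There is no real obstacle here — the only subtlety is to verify that identicalness of the two single-vertex induced subgraphs follows from equality of neighborhoods and assignments; this needs the remark that the common neighborhood lies in $\Vcov$, so it does not meet $\{v_1,v_2\}$, making condition~2(a) automatic.
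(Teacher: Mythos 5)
Your proof is correct and follows essentially the same counting argument as the paper: bound the possible neighborhoods by $2^\VC$ using that $\Vind^\prime$ is independent, bound the possible vertex assignments by $2^k\cdot k^2$, and conclude injectivity of $v\mapsto(N(G^\prime,v),\asgn{v})$ from the termination of the kernelization. Your extra remark verifying that equal neighborhoods and assignments imply identical single-vertex subgraphs (via condition 2(a)) is a correct and slightly more explicit justification of a step the paper takes for granted.
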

	\begin{proof}
		Recall that $\Vind^\prime$ contains no pair of vertices which induce identical subgraphs, and hence any pair of vertices $v_1,v_2\in \Vind^\prime$ does not satisfy at least one of $N(G,v_1)=N(G,v_2)$ and $\asgn{v_1}=\asgn{v_2}$.
		Therefore, $|\Vind^\prime|$ can be bounded by the number of distinct combinations of the neighborhood and the vertex assignment.
		Since $\Vind^\prime$ is an independent set, $\Neigh{G^\prime}{v}\subseteq \Vcov^\prime$ for each vertex $v\in \Vind^\prime$.
		Recall that $|\Vcov^\prime|\le \VC$, and hence the number of (possible) neighborhoods can be bounded by $2^\VC$.
		Since there are at most $k$ colors, the number of (possible) vertex assignments can be bounded by $2^k\cdot k^2$.
		We thus have $|\Vind^\prime|\le 2^\VC \cdot 2^k\cdot k^2$ as claimed.
		\qed
	\end{proof}
	
	This completes the proof of Theorem~\ref{the:split}.
	\section{W[1]-hardness}
\label{sec:hardness}

	Because even the shortest  variant is fixed-parameter tractable when parameterized by $k+\VC$, one may expect that $\VC$ is a strong parameter and the problem is fixed-parameter tractable with only $\VC$.
	However, we prove the following theorem in this section.
	\begin{theorem}
		\label{the:W1}
		\textsc{List coloring reconfiguration} is $W[1]$-hard when parameterized by $\VC$, where $\VC$ is the upper bound on the size of a minimum vertex cover of an input graph.
	\end{theorem}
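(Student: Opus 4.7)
The plan is to give a polynomial-time FPT-reduction from \textsc{Multicolored Clique}, a canonical $W[1]$-hard problem: given a graph $G$ with vertex partition $V_1\uplus\cdots\uplus V_k$, decide whether $G$ contains a clique with exactly one vertex from each~$V_i$. From such an instance I will construct in polynomial time an instance $\Inst=(H,\verlist,\mapf_\ini,\mapf_\tar)$ of \lcr whose vertex cover is bounded by a function of $k$ alone, and which is a yes-instance iff $G$ has a multicolored clique.

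The backbone of $H$ will be a vertex cover consisting of $k$ \emph{selection} vertices $\select=\{s_1,\ldots,s_k\}$ --- one per color class of $G$ --- together with a constant number of auxiliary ``trigger'' vertices that carry the difference between $\mapf_\ini$ and $\mapf_\tar$; this gives $\VC(H)=k+O(1)$. Each $s_i$ has list $\{c_u : u\in V_i\}\cup\{\key\}$: one color per vertex of class~$i$ plus a safe rest color $\key$. In the independent rim $\Vfbd$ I add, for every non-edge $uv\notin E(G)$ with $u\in V_i$ and $v\in V_j$, a forbidden vertex adjacent to $s_i$ and $s_j$ with list of size exactly two, $\{c_u,c_v\}$. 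When $s_i=c_u$ and $s_j=c_v$ hold simultaneously, both colors of this forbidden vertex are already occupied by its two neighbours, so no valid $\verlist$-coloring realises this non-edge combination, and moreover the forbidden vertex is itself immovable at that instant.

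The trigger gadget will be designed so that flipping the trigger between $\mapf_\ini$ and $\mapf_\tar$ requires, at some intermediate step, every $s_i$ to simultaneously hold a color $c_{u_i}$ rather than $\key$. Given this, the forward direction is easy: a multicolored clique $(u_1,\ldots,u_k)$ in $G$ yields a reconfiguration sequence that recolors each $s_i$ from $\key$ to $c_{u_i}$ in any order, executes the trigger flip, and reverses the first stage. Because only non-edges are blocked, no forbidden vertex ever needs to move. Conversely, the trigger must flip at some step of any valid sequence, and at that step each $s_i$ must carry some $c_{u_i}$ with $u_i\in V_i$; the pairwise forbidden vertices then force $(u_1,\ldots,u_k)$ to be a multicolored clique in~$G$.

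The main obstacle is engineering the trigger gadget so that its flip genuinely demands a synchronous ``clique configuration'' on $\select$, while keeping the backbone of size $k+O(1)$ and without exposing escape routes through the forbidden rim. A closely related subtlety is the global argument that forbidden vertices cannot be ``pre-moved'' earlier in the sequence to temporarily relax a future edge constraint; the tight two-element lists render each forbidden vertex pointwise immovable inside its bad configuration, but one still has to verify that this local rigidity propagates to the whole reconfiguration sequence. Once the trigger gadget is calibrated and these rigidity arguments are in place, the bound $\VC(H)=k+O(1)$ is immediate and yields the desired $W[1]$-hardness under the single parameter $\VC$.
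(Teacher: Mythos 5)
Your high-level architecture matches the paper's almost exactly: a small set of selection vertices (one per choice slot), a safe ``rest'' color $\key$, an independent rim of degree-two forbidding vertices with tight two-element lists that block bad simultaneous assignments, and a trigger whose flip forces all selectors off $\key$ at one synchronized instant. The switch from \textsc{Independent Set} (which the paper uses, parameterized by solution size $s$) to \textsc{Multicolored Clique} is a legitimate and inessential variation: it trades the paper's $(i,j;p,p)$ gadgets (which enforce distinctness of the $s$ chosen vertices) for color-class-disjoint lists, and it blocks non-edges instead of edges. Your worry about forbidden vertices being ``pre-moved'' is unfounded and you can drop it: the blocking argument is purely a \emph{properness} constraint on a single intermediate coloring. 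At the moment the trigger fires, each forbidding vertex must hold some color from its size-two list, so if both list colors were simultaneously occupied by its two selector neighbours the coloring would not be proper; where the rim was recolored earlier in the sequence is irrelevant.

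The genuine gap is that you explicitly leave the trigger gadget undesigned and call it the ``main obstacle,'' but the reduction does not exist without it. The paper's trigger is simple and worth internalizing: add an edge $w_1 w_2$ with $\verlist(w_1)=\hset{a,b}$ and $\verlist(w_2)=\hset{a,b,\key}$, make $w_2$ adjacent to every selection vertex, and set $\mapf_\ini(w_1)=a,\ \mapf_\ini(w_2)=b$ while $\mapf_\tar(w_1)=b,\ \mapf_\tar(w_2)=a$. Swapping the colors on the edge $w_1 w_2$ forces $w_2$ to pass through $\key$ at some step (since $w_1$ can only be recolored when $w_2$ vacates both $a$ and $b$); at that step no selector may hold $\key$, which is exactly the synchronization you need. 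The vertex cover is then $\select\cup\hset{w_2}$, of size $k+1$ (resp.\ $s+1$), and every other vertex has degree at most two into this cover. With that gadget supplied, your forward and backward directions go through as sketched and the construction is plainly polynomial, so the proposal becomes a complete proof along the same lines as the paper's.
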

	Recall that \textsc{list coloring reconfiguration} is PSPACE-complete even for a fixed constant $k \ge 4$.
	Therefore, the problem is intractable if we take only one parameter, either $k$ or $\VC$. 
	
	\medskip
	In order to prove Theorem~\ref{the:W1}, we give an FPT-reduction from the \textsc{independent set} problem when parameterized by the solution size $s$, in which we are given a graph $H$ and an integer $s\ge 0$, and asked whether $H$ has an independent set of size at least $s$.
	This problem is known to be W[1]-hard~\cite{DF99}.
	
\subsection{Construction}

	Let $H$ be a graph with $n$ vertices $u_1,u_2,\ldots,u_n$, and $s$ be an integer as an input for \textsc{independent set}.
	Then, we construct the corresponding instance $(G,\verlist,\mapf_{\ini},\mapf_{\tar})$ of \textsc{list coloring reconfiguration} as follows.
	(See also \figurename~\ref{fig:reduction}.)

\begin{figure}[t]
	\begin{center}
		\includegraphics[width=0.8\textwidth,clip]{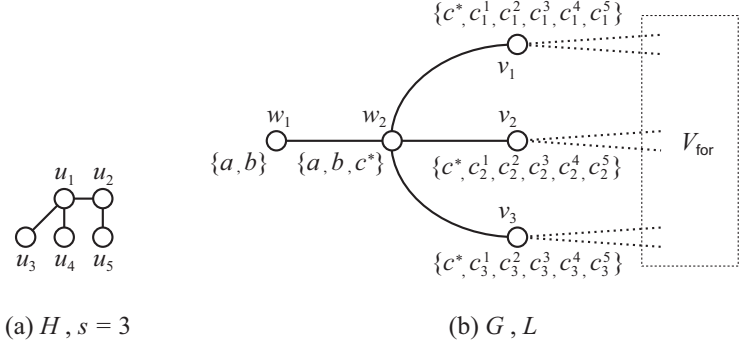}
	\end{center}
	\caption{(a) An instance $H$ of \textsc{independent set}, and (b) the graph $G$ and the list $L$.
		The set $\Vfbd$ contains vertices of $(i,j;p,q)$-forbidding gadgets for all $(i,j)\in \hset{(1,2),(1,3),(2,3)}$ 
		and all $(p,q)\in \hset{(1,1),(2,2),(3,3),(4,4),(5,5),\allowbreak (1,2),(2,1),(1,3),(3,1),(1,4),(4,1),(2,5),(5,2)}$;
		thus $|\Vfbd|=39$.}
	\label{fig:reduction}
\end{figure}

	We first create $s$ vertices $v_1, v_2, \ldots, v_s$, which are called \emph{selection vertices}; let $\select$ be the set of all selection vertices.
	For each $i\in \hset{1,2,\ldots,s}$, we set $\verlist(v_i)=\hset{\key,c_i^1,c_i^2,\ldots,c_i^n}$.
	In our reduction, we will construct $G$ and $L$ so that assigning the color $c_i^p$, $p \in \hset{1,2,\ldots,n}$, to $v_i \in \select$ corresponds to choosing the vertex $u_p \in V(H)$ as a vertex in an independent set of $H$.
	Then, in order to make a correspondence between a color assignment to $\select$ and an independent set of size $s$ in $H$, we need to construct the following properties:
	\begin{itemize}
		\item For each $p \in \{1,2,\ldots,n\}$, we use at most one color from $\hset{c_1^p, c_2^p, \ldots, c_s^p}$; this ensures that each vertex $u_p \in V(H)$ can be chosen at most once as an independent set.
		\item For each $p, q \in \{1,2,\ldots,n\}$ with $u_p u_q \in E(H)$, we use at most one color from $\hset{c_1^p, c_2^p, \ldots, c_s^p,\allowbreak c_1^q, c_2^q, \ldots, c_s^q}$; then, no two adjacent vertices in $H$ are chosen as an independent set.
	\end{itemize}
	To do this, we define an $(i,j;p,q)$-forbidding gadget for $i,j\in \hset{1,2,\ldots s}$ and $p,q\in \hset{1,2,\ldots,n}$.
	The $(i,j;p,q)$-\emph{forbidding gadget} is a vertex $\fbd{i}{j}$ which is adjacent to $v_i$ and $v_j$ and has a list $\verlist(\fbd{i}{j})=\hset{c_i^q,c_j^p}$.
	Observe that the vertex $\fbd{i}{j}$ forbids that $v_i$ and $v_j$ are simultaneously colored with $c_i^p$ and $c_j^q$, respectively.
	In order to satisfy the desired properties above, we now add our gadgets as follows: for all $i,j\in \hset{1,2,\ldots s}$ with $i<j$,
	\begin{itemize}
		\item add an $(i,j;p,p)$-forbidding gadget for every vertex $u_p\in V(H)$; and
		\item add $(i,j;p,q)$- and $(i,j,q,p)$-forbidding gadgets for every edge $u_p u_q\in E(H)$.
	\end{itemize}
	We denote by $\Vfbd$ the set of all vertices in the forbidding gadgets.
	We finally create an edge consisting of two vertices $w_1$ and $w_2$ such that $\verlist(w_1)=\hset{a,b}$ and $\verlist(w_2)=\hset{a,b,\key}$, and connect $w_2$ with all selection vertices in $\select$.

	Finally, we construct two $\verlist$-colorings $\mapf_{\ini}$ and $\mapf_{\tar}$ of $G$ as follows:
	\begin{itemize}
		\item for each $v_i\in \select$, $\mapf_{\ini}(v_i)=\mapf_{\tar}(v_i)=\key$;
		\item for each $\fbd{i}{j}\in \Vfbd$, $\mapf_{\ini}(\fbd{i}{j})$ and $\mapf_{\tar}(\fbd{i}{j})$ are arbitrary chosen colors from $\verlist(\fbd{i}{j})$; and
		\item $\mapf_{\ini}(w_1)=\mapf_{\tar}(w_2)=a$, and $\mapf_{\tar}(w_1)=\mapf_{\ini}(w_2)=b$.
	\end{itemize}
	Note that both $\mapf_{\ini}$ and $\mapf_{\tar}$ are proper $\verlist$-colorings of $G$.
	
	In this way, we complete the construction of $(G,\verlist,\mapf_{\ini},\mapf_{\tar})$.
	
\subsection{Correctness of the reduction}

	In this subsection, we prove the following three statements:
	\begin{itemize}
		\item $(G,\verlist,\mapf_{\ini},\mapf_{\tar})$ can be constructed in time polynomial in the size of $H$.
		\item The upper bound $\VC$ on the size of a minimum vertex cover of $G$ depends only on $s$.
		\item $H$ is a yes-instance of \textsc{independent set} if and only if $(G,\verlist,\mapf_{\ini},\mapf_{\tar})$ is a yes-instance of \lcr.
	\end{itemize}

	In order to prove the first statement, it suffices to show that the size of $(G,\verlist,\mapf_{\ini},\mapf_{\tar})$ is bounded polynomially in $n=|V(H)|$.
	From the construction, we have $|V(G)|=|\select|+|\Vfbd|+|\hset{w_1,w_2}|\le s+s^2\times (|V(H)|+2|E(H)|) +2=O(n^4)$.
	In addition, each list contains $O(n)$ colors.
	Therefore, the construction can be done in time $O(n^{O(1)})$.
	
	The second statement immediately follows from the fact that $\hset{w_2}\cup \select$ is a vertex cover in $G$ of size $s+1$; observe that $G\setminus V^\prime=G[\hset{w_1}\cup \Vfbd]$ contains no edge.
	
	Finally, we prove the last statement as follows.
	\begin{lemma}
		\label{lem:correct}
		$H$ is a yes-instance of \textsc{independent set} if and only if $(G,\verlist,\mapf_{\ini},\mapf_{\tar})$ is a yes-instance of \lcr.
	\end{lemma}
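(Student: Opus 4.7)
The plan is to prove each direction separately, with the ``pivot'' being the forced appearance of color $\key$ on $w_2$.

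For the only-if direction, suppose a reconfiguration sequence $\hseq{\mapf_0,\mapf_1,\ldots,\mapf_\ell}$ from $\mapf_\ini$ to $\mapf_\tar$ exists. The first key observation is that at some intermediate index $m$ we must have $\mapf_m(w_2)=\key$. This follows because $w_1w_2$ is an edge with $\verlist(w_1)=\hset{a,b}$, while $w_1$ starts at $a$ and ends at $b$ and $w_2$ starts at $b$ and ends at $a$: tracking the first step at which $w_1$ switches from $a$ to $b$, one sees that $w_2$ must already have vacated $b$, and since $\verlist(w_2)=\hset{a,b,\key}$ and $w_2$ cannot take value $a$ before $w_1$ leaves $a$, the only way to vacate $b$ is through $\key$. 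Since $w_2$ is adjacent to every selection vertex, at this step $\mapf_m(v_i)\ne\key$ for each $i$, and therefore $\mapf_m(v_i)=c_i^{p_i}$ for some $p_i\in\hset{1,\ldots,n}$. I then argue via the forbidding gadgets that $\hset{u_{p_1},\ldots,u_{p_s}}$ is an independent set of size $s$ in $H$: since the $(i,j;p,p)$-gadget has list $\hset{c_i^p,c_j^p}$ and its only non-selection neighbors are $v_i$ and $v_j$, it cannot be properly colored when $p_i=p_j=p$, forcing $p_1,\ldots,p_s$ to be pairwise distinct; similarly, for each edge $u_pu_q\in E(H)$, the pair of gadgets $(i,j;p,q)$ and $(i,j;q,p)$ together forbids $\hset{p_i,p_j}=\hset{p,q}$, so no two selected vertices are adjacent in $H$.

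For the if direction, given an independent set $\hset{u_{p_1},\ldots,u_{p_s}}$ of $H$, I would construct an explicit reconfiguration sequence in five stages: (1) one by one, recolor each forbidding-gadget vertex $\fbd{i}{j}$ to whichever color of its two-element list is compatible with the intended assignment $v_i\mapsto c_i^{p_i}$, $v_j\mapsto c_j^{p_j}$; this is valid because every selection vertex currently has color $\key$, which lies outside $\verlist(\fbd{i}{j})$; (2) recolor each $v_i$ from $\key$ to $c_i^{p_i}$, which is valid since the adjacent gadgets have just been preadjusted and $\mapf(w_2)=b\ne c_i^{p_i}$; (3) swap $w_1$ and $w_2$ by the three steps $w_2\colon b\to\key$, $w_1\colon a\to b$, $w_2\colon \key\to a$, each valid because no $v_i$ is $\key$ during the first, $w_2=\key$ during the second, and $w_1=b$ during the third; (4) recolor each $v_i$ from $c_i^{p_i}$ back to $\key$; and (5) recolor each $\fbd{i}{j}$ to its target color, which is unconstrained because every $v_i$ is again $\key$ and $\key\notin\verlist(\fbd{i}{j})$.

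The main obstacle is justifying step (1), namely that for every forbidding gadget there is a color in its list that is safe with respect to the planned selection assignment. Concretely, the $(i,j;p,q)$-gadget with list $\hset{c_i^q,c_j^p}$ is unsafe only when both $p_i=q$ and $p_j=p$ simultaneously; for same-vertex gadgets ($p=q$) this means $p_i=p_j$, excluded because the chosen independent-set vertices are distinct, and for edge gadgets this means $\hset{p_i,p_j}=\hset{p,q}$ for some edge $u_pu_q\in E(H)$, excluded because an independent set contains no edge. Once this structural fact is in hand, the only-if argument as well as the remaining phases of the explicit construction reduce to routine adjacency checks using $\key,a,b\notin\verlist(\fbd{i}{j})$ and $a,b\notin\verlist(v_i)$.
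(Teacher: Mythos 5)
Your proof is correct and follows essentially the same route as the paper's: the forward direction pivots on the forced appearance of $\key$ on $w_2$ (so that all selection vertices simultaneously encode an independent set via the forbidding gadgets), and the reverse direction builds the same staged reconfiguration sequence passing through the independent-set assignment (gadgets, then $\select$, then the three-step $w_1/w_2$ swap, then reverse). Your explicit verification that each forbidding gadget retains an available color under the chosen assignment is a detail the paper leaves implicit, but it is not a different approach.
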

	\begin{proof}
		We first prove the if direction.
		Assume that there exists a reconfiguration sequence $\Seq$ for $(G,\verlist,\mapf_{\ini},\mapf_{\tar})$.
		Then, $\Seq$ must contain at least one $\verlist$-coloring $\mapf$ such that $\mapf(w_2)=\key$ in order to recolor $w_1$ from $a$ to $b$.
		Since $w_2$ is adjacent to all vertices in $\select$, $\mapf(v_i)\ne \key$ holds for every $v_i\in \select$.
		Then, by the construction, the vertex set $\hset{ u_p \colon c_i^p = f(v_i), v_i \in \select }$ is an independent set in $H$ of size $|\select| = s$. 
		
		We then prove the only-if direction.
		We construct a reconfiguration sequence for $(G,\verlist,\mapf_{\ini},\mapf_{\tar})$ which passes through two $\verlist$-colorings $\mapf_{\ini}^\prime$ and $\mapf_{\tar}^\prime$ defined as follows.
		
		From the assumption, $H$ has an independent set $I$ of size $s$, say, $I=\hset{u_1,u_2,\ldots,u_s}$.
		Then, we define $\mapf_{\ini}^\prime$ as follows:
		\begin{itemize}
			\item for each $v_i\in \select$, $\mapf_{\ini}^\prime(v_i)=c_i^i$;
			\item for each $(i,j;p,q)$-forbidding vertex $\fbd{i}{j}\in \Vfbd$, $\mapf_{\ini}^\prime(\fbd{i}{j})$ is an arbitrary chosen color from $\verlist(\fbd{i}{j})\setminus \hset{c_i^i,c_j^j}$; and
			\item $\mapf_{\ini}^\prime(w_1)=\mapf_{\ini}(w_1)=a$ and $\mapf_{\ini}^\prime(w_2)=\mapf_{\ini}(w_2)=b$.
		\end{itemize}
		Note that $\mapf_{\ini}^\prime$ is a proper $\verlist$-coloring of $G$.
		We next show that $\mapf_{\ini}$ and $\mapf_{\ini}^\prime$ are reconfigurable.
		We first recolor all vertice $\fbd{i}{j}\in \Vfbd$ to the colors $\mapf_{\ini}^\prime(w)$ $(\ne \key)$ in an arbitrary order.
		This can be done, since $\mapf_{\ini}(v_i)=\key$ for all $v_i\in \select$ and $\Vfbd$ is an independent set in $G$.
		We then recolor all vertices $v_i\in \select$ to the colors $\mapf_{\ini}^\prime(v_i)$ in an arbitrary order.
		This also can be done, since $\mapf_{\ini}^\prime$ is a proper $\verlist$-coloring and $\select$ is an independent set in $G$.
		Thus, $\mapf_{\ini}$ and $\mapf_{\ini}^\prime$ are reconfigurable.
		
		By the similar arguments as $\mapf_{\ini}$, we define $\mapf_{\tar}^\prime$ as follows:
		\begin{itemize}
			\item for each $v_i\in \select$, $\mapf_{\tar}^\prime(v_i)=c_i^i$;
			\item for each $(i,j;p,q)$-forbidding vertex $\fbd{i}{j}\in \Vfbd$, $\mapf_{\tar}^\prime(\fbd{i}{j})$ is an arbitrary chosen color from $\verlist(\fbd{i}{j})\setminus \hset{c_i^i,c_j^j}$; and
			\item $\mapf_{\tar}^\prime(w_1)=\mapf_{\tar}(w_1)=b$ and $\mapf_{\tar}^\prime(w_2)=\mapf_{\tar}(w_2)=a$.
		\end{itemize}
		Then, $\mapf_{\tar}$ and $\mapf_{\tar}^\prime$ are reconfigurable.
		
		Finally, we prove that $\mapf_{\ini}^\prime$ and $\mapf_{\tar}^\prime$ are reconfigurable.
		Recall that $\mapf_{\ini}^\prime(w_1)=\mapf_{\tar}^\prime(w_2)=a$, $\mapf_{\tar}^\prime(w_1)=\mapf_{\ini}^\prime(w_2)=b$, and $\mapf_{\ini}^\prime(v_i)=\mapf_{\tar}^\prime(v_i)\ne \key$ for all $v_i\in \select$.
		Then, we can swap the color $a$ and $b$ by the following three steps:
		\begin{itemize}
			\item recolor $w_2$ to $\key$;
			\item recolor $w_1$ to $b$; and
			\item recolor $w_2$ to $a$.
		\end{itemize}
		After that, we can recolor all vertices $\fbd{i}{j}\in \Vfbd$ to the colors $\mapf_{\tar}^\prime(\fbd{i}{j})$ in the arbitrary order, since $\Vfbd$ is an independent in $G.$
		
		Therefore, $(G,\verlist,\mapf_{\ini},\mapf_{\tar})$ is a yes-instance of \lcr.
		\qed
	\end{proof}

	This completes the proof of Theorem~\ref{the:W1}.

	\section{Conclusion}
	
	In this paper, we have studied \lcr~from the viewpoint of parametrized complexity, in particular, with several graph parameters, 
	and painted an interesting map of graph parameters in \figurename~\ref{fig:result} which shows the boundary between fixed-parameter tractability and intractability. 
	

\bibliographystyle{abbrv}

\begin{thebibliography}{99}
	\bibitem{BB13}
	Bonamy, M., Bousquet, N.:
	Recoloring bounded treewidth graphs. 
	Electronic Notes in Discrete Mathematics 44, pp.~257--262 (2013)
	
	\bibitem{BC09}
	Bonsma, P., Cereceda, L.: 
	Finding paths between graph colourings: PSPACE-completeness and superpolynomial distances. 
	Theoretical Computer Science 410, pp.~5215--5226 (2009) 
	
	\bibitem{BJLPP14}
	Bonamy, M., Johnson, M., Lignos, I., Patel, V., Paulusma, D.: 
	Reconfiguration graphs for vertex colourings of chordal and chordal bipartite graphs. 
	J.~Combinatorial Optimization 27, pp.~132--143 (2014)
	
	
	
	
	\bibitem{BMNR14}
	Bonsma, P., Mouawad, A.E., Nishimura, N., Raman, V.:
	The complexity of bounded length graph recoloring and CSP reconfiguration.
	Proc. of IPEC 2014, LNCS 8894, pp.~110--121 (2014)
	
	\bibitem{BLS99} 
	Brandst\"adt, A., Le, V.B., Spinrad, J.P.:
	Graph Classes: A Survey. 
	SIAM (1999)
	
	\bibitem{BMMN16}
	Brewster, R.C., McGuinness, S., Moore, B., Noel, J.A.:
	A dichotomy theorem for circular colouring reconfiguration. 
	Theoretical Computer Science 639, pp.~1--13 (2016)
	
	\bibitem{Cer07}
	Cereceda, L.:
	Mixing Graph Colourings.
	Ph.D. Thesis, London School of Economics and Political Science (2007)
	
	\bibitem{CHJ11}
	Cereceda, L., van~den Heuvel, J., Johnson, M.: 
	Finding paths between $3$-colorings. 
	J.~Graph Theory 67, pp.~69--82 (2011)
	
	\bibitem{DF99}
	Downey, R., G., Fellows, M., R.:
	Parameterized Complexity.
	Springer-Verlag New York (1999)
	
	\bibitem{DFFV06}
	Dyer, M., Flaxman, A., Frieze, A., Vigoda, E.:
	Randomly colouring sparse random graphs with fewer colours than the maximum degree.
	Random Structures Algorithms 29, pp.~450--465 (2006)
	
	\bibitem{GLO13}
	Gajarsk\'y, J., Lampis, M., Ordyniak, S.:
	Parameterized algorithms for modular-width.
	Proc.~of IPEC 2013, LNCS 8246, pp.~163--176 (2013)
	
	\bibitem{Gallai67}
	Gallai, T.:
	Transitiv orientierbare Graphen.
	Acta Mathematica Academiae Scientiarum Hungarica 18, Issue 1-2, pp.~25--66 (1967)
	
	
	\bibitem{HP10}
	Habib, M., Paul, C.:
	A survey of the algorithmic aspects of modular decomposition.
	Computer Science Review 4, issue 1, pp.~41--59 (2010)
	
	\bibitem{HIZ14}
	Hatanaka, T., Ito, T., Zhou, X.:
	The list coloring reconfiguration problem for bounded pathwidth graphs.
	IEICE Trans.\ on Fundamentals of Electronics, Communications and Computer Sciences E98-A, No. 6, pp.~1168--1178 (2015)
	
	\bibitem{Jan13}
	van den Heuvel, J.:
	The complexity of change.
	Cambridge University Press, pp.~127--160 (2013)
	
	
	\bibitem{IDHPSUU}
	Ito, T., Demaine, E.D., Harvey, N.J.A., Papadimitriou, C.H., Sideri, M., Uehara, R., Uno, Y.: 
	On the complexity of reconfiguration problems.
	Theoretical Computer Science 412, pp.~1054--1065 (2011)
	
	
	
	
	
	\bibitem{JKKPP14}
	Johnson, M., Kratsch, D., Kratsch, S., Patel, V., Paulusma, D.:
	Finding shortest paths between graph colourings.
	Algorithmica 75, Issue 2, pp.~295–-321 (2016)
	
	
	
	\bibitem{KNR92}
	Kannan, S., Naor, M., Rudich, S.:
	Implicit representation of graphs.
	SIAM J.~Discrete Mathematics 5, pp.~596--603 (1992)
	
	
	\bibitem{MM05}
	McConnell, R.M., de Montgolfier, F.:
	Linear-time modular decomposition of directed graphs.
	Discrete Applied Mathematics 145, pp.~198--209 (2005)
	
	
	
	
	
	\bibitem{Wro14}
	Wrochna, M.:
	Reconfiguration in bounded bandwidth and treedepth.
	{\tt  arXiv:1405.0847} (2014)
	
	\bibitem{Wro15}
	Wrochna, M.:
	Homomorphism reconfiguration via homotopy.
	Proc.~of STACS 2015, 
	LIPIcs 30, pp.~730--742 (2015)
	
	
\end{thebibliography}

\end{document}